\newcommand{\vect}[1]{\mathbf{#1}}
\newcommand{\matr}[1]{\mathbf{#1}}
\newcommand{\junk}[1] {}
\def\XXint#1#2#3{{\setbox0=\hbox{$#1{#2#3}{\int}$}
\vcenter{\hbox{$#2#3$}}\kern-.5\wd0}}
\newcommand*\widebar[1]{%
  \hbox{%
    \vbox{%
      \hrule height 0.5pt 
      \kern0.3ex
      \hbox{%
        \kern-0.05em
        \ensuremath{#1}%
        \kern-0.05em
      }%
    }%
  }%
}
\newcommand{\half}{\frac{1}{2}}
\newcommand{\ihat}{\hat{\imath}}
\newcommand{\jhat}{\hat{\jmath}}
\newcommand{\E}{\mathcal{E}}
\newcommand{\dx}{\Delta x}
\newcommand{\dxhat}{\Delta \hat{x}}
\newcommand{\dyhat}{\Delta \hat{y}}
\newcommand{\dy}{\Delta y}
\newcommand{\dz}{\Delta z}
\newcommand{\dt}{\Delta t}
\newcommand{\DA}{\matr{D}_{A}}
\newcommand{\Dlx}{\matr{D}_{l_x}}
\newcommand{\Dly}{\matr{D}_{l_y}}
\newcommand{\Dlxp}{\matr{D}_{l'_x}}
\newcommand{\Dlyp}{\matr{D}_{l'_y}}
\newcommand{\eps}{\varepsilon}
\newcommand{\Dex}{\matr{D}_{\varepsilon_x}}
\newcommand{\Dexhat}{\matr{\hat{D}}_{\varepsilon_x}}
\newcommand{\Dey}{\matr{D}_{\varepsilon_y}}
\newcommand{\Dsx}{\matr{D}_{\sigma_x}}
\newcommand{\Dsxhat}{\matr{\hat{D}}_{\sigma_x}}
\newcommand{\Dsy}{\matr{D}_{\sigma_y}}
\newcommand{\Dm}{\matr{D}_{\mu}}
\newcommand{\Gx}{\matr{G}_x}
\newcommand{\Gy}{\matr{G}_y}
\newcommand{\T}{\vect{T}}
\newcommand{\ES}{E_N}
\newcommand{\EShat}{\vect{\hat{E}}_S}
\newcommand{\HS}{H_N}
\newcommand{\HShat}{\vect{\hat{H}}_S}
\newcommand{\Hzhat}{\vect{\hat{H}}_{\jhat+\half}}
\newcommand{\Hz}{H_{j-\half}}
\newcommand{\ESleft}{\frac{\dy}{2} \left( \frac{\eps_x}{\dt} + \frac{\sigma_x}{2} \right)}
\newcommand{\ESright}{\frac{\dy}{2} \left( \frac{\eps_x}{\dt} - \frac{\sigma_x}{2} \right)}
\newcommand{\EShatleft}{ \frac{\dy}{2r} \left( \frac{\Dexhat}{\dt} + \frac{\Dsxhat}{2} \right)}
\newcommand{\EShatright}{\frac{\dy}{2r} \left( \frac{\Dexhat}{\dt} - \frac{\Dsxhat}{2} \right)}
\newtheorem{theorem}{Theorem}
\newtheorem{definition}{Definition}
\definecolor{HV}{rgb}{0,0.3,0}
\newcommand{\hvcolorname}{green}
\definecolor{Hcol}{rgb}{0.7,0,0}
\definecolor{EB}{rgb}{0.7,0,0}
\definecolor{Ecol}{rgb}{0,0,1}
\definecolor{legendGreen}{rgb}{0,0.5,0}
\definecolor{legendRed}{rgb}{1,0,0}
\definecolor{legendBlue}{rgb}{0,0,1}
\definecolor{legendGrey}{rgb}{0.5,0.5,0.5}
\begin{document}
%
\title{A Dissipative Systems Theory for FDTD with Application to Stability Analysis and Subgridding}

%

\author{Fadime~Bekmambetova,
        Xinyue~Zhang,
        and~Piero~Triverio,~\IEEEmembership{Senior Member,~IEEE}
\thanks{Manuscript received ...; revised ...}%
\thanks{This work was supported in part by the Natural Sciences and Engineering Research Council of
Canada (Discovery grant program) and in part by the Canada Research Chairs program.}
\thanks{F.~Bekmambetova, X.~Zhang and P.~Triverio are with the Edward S. Rogers Sr. Department of Electrical and Computer Engineering, University of Toronto, Toronto, M5S 3G4 Canada (email: fadime.bekmambetova@mail.utoronto.ca, xinyuezhang.zhang@mail.utoronto.ca, piero.triverio@utoronto.ca).}
}

\markboth{Journal of \LaTeX\ Class Files,~Vol.~11, No.~4, December~2012}%
{Shell \MakeLowercase{\textit{et al.}}: Bare Demo of IEEEtran.cls for Journals}
%

\maketitle

\IEEEpeerreviewmaketitle
\begin{abstract}
This paper establishes a far-reaching connection between the Finite-Difference Time-Domain method (FDTD) and the theory of dissipative systems. The FDTD equations for a rectangular region are written as a dynamical system having the magnetic and electric fields on the boundary as inputs and outputs. Suitable expressions for the energy stored in the region and the energy absorbed from the boundaries are introduced, and used to show that the FDTD system is dissipative under a generalized Courant-Friedrichs-Lewy condition. Based on the concept of dissipation, a powerful theoretical framework to investigate the stability of FDTD methods is devised. The new method makes FDTD stability proofs simpler, more intuitive, and modular. Stability conditions can indeed be given on the individual components (e.g. boundary conditions, meshes, embedded models) instead of the whole coupled setup. As an example of application, we derive a new subgridding method with material traverse, arbitrary grid refinement, and guaranteed stability. The method is easy to implement  and has a straightforward stability proof. Numerical results confirm its stability, low reflections, and ability to handle material traverse.
\end{abstract}
\begin{IEEEkeywords}
	finite-difference time-domain, stability, energy, dissipation, subgridding.
\end{IEEEkeywords}

\section{Introduction}
\label{sec:intro}

The Finite-Difference Time-Domain (FDTD) method is widely used to solve Maxwell's equations numerically in microwave and antenna engineering, photonics and physics~\cite{taflove2005computational, Gedney}. FDTD is versatile, easy to implement, and has a low computational cost per time step. Through explicit update equations, FDTD recursively computes the electric and magnetic field in the region of interest without requiring the solution of linear systems. This update process is stable if time step $\dt$ satisfies the Courant-Friedrichs-Lewy (CFL) stability limit~\cite{Gedney}
\begin{equation}
	\dt < \frac{1}{\sqrt{\frac{1}{\eps\mu}} \sqrt{\frac{1}{\dx^2} + \frac{1}{\dy^2}  + \frac{1}{\dz^2}}  }\,,
	\label{eq:CFL3d}
\end{equation}
where $\dx$, $\dy$, $\dz$ denote cell size, $\varepsilon$ denotes permittivity and $\mu$ denotes permeability.  While many breakthroughs have been achieved in FDTD since Yee's original algorithm~\cite{yee}, the efficiency of FDTD for multiscale problems remains an open problem. The simultaneous presence of large and small geometrical features can dramatically reduce FDTD efficiency because of two factors. First, the mesh has to be refined, at least locally, to properly resolve small features, which increases the number of unknowns and the cost per iteration. Second, because of~\eqref{eq:CFL3d}, a mesh refinement imposes a smaller time step, which further increases computational cost. For example, a 3X refinement of the entire FDTD mesh increases computational cost by 81 times. These issues are unfortunate since multiscale problems abound in practice.

Numerous solutions have been proposed to mitigate this issue, including sophisticated boundary conditions to mimic open spaces~\cite{Gedney}, local grid refinement~\cite{okoniewski1997three,thoma1996consistent,xiao2007three} (commonly known as \emph{subgridding}), thin wire models~\cite{taflove2005computational}, lumped elements~\cite{sui1992extending, edelvik2004general}, and hybridizations with model order reduction~\cite{denecker,jnl-2014-tmtt-fdtd,cnf-2015-nemo-fdtdmor}, finite elements~\cite{lee1997time}, integral equations~\cite{bretones1998hybrid}, ray tracing~\cite{wang2000hybrid} and implicit schemes like ADI-FDTD~\cite{adi-fdtd, adi-fdtd-2}. 
From a system theory viewpoint, many of these methods consist of the interconnection of \emph{subsystems}, such as models, algorithms, boundary conditions. For example, in~\cite{bretones1998hybrid}, an FDTD model, used to describe an inhomogeneous scatterer, is coupled to the time-domain method of moments, used to model a thin-wire antenna. In this way, one can leverage the respective strengths of different algorithms. However, while the stability properties of the individual algorithms may be well understood, ensuring the stability of their combination can be a formidable task. Ensuring stability is not trivial even in the relatively simple case of FDTD subgridding, where one just couples a coarse and a fine FDTD grid through an interpolation rule to relate fields at the grid transition. The lack of a systematic approach to ensure the stability of advanced FDTD methods is a major issue, that limits the development of new methods and their adoption by industry, where guaranteed stability is mandatory. This issue is the main motivation for this paper, that proposes a new theoretical framework to investigate and ensure the stability of both simple and advanced FDTD methods. 

Several techniques are available to analyze the stability of FDTD-like methods. Von Neumann analysis~\cite{taflove2005computational} is the simplest, but is only applicable to uniform meshes and homogeneous materials. The iteration method~\cite{Gedney} checks if the eigenvalues of the matrix that relates the current and next field solution are all below one in magnitude. In the energy method~\cite{edelvik2004general} instead, one writes an expression for the total energy stored in the simulation domain, and then checks if the algorithm satisfies a discrete equivalent of the principle of energy conservation. Since energy conservation prevents an unphysical growth of the solution, this implies stability~\cite{jnl-2007-tadvp-fundamentals}. The iteration and energy methods are general, but they can lead to long derivations, since they require the analysis of the whole coupled scheme, which may consist of many subsystems such as FDTD meshes, different boundary conditions, reduced order models, and so on. For example, in a subgridding scenario, one must derive the iteration matrix or energy function of the whole scheme, taking simultaneously into account coarse and fine meshes, interpolation rules and boundary conditions. This issue makes stability analysis quite involved. Moreover, it does not yield stability conditions on the individual subsystems. Consequently, if a subsystem is changed (for example, a different boundary condition is introduced), the iteration matrix or energy function of the whole problem must be derived again.

In this paper, we propose a new stability framework  for FDTD. The framework is based on the theory of dissipative systems~\cite{willems1972dissipative}, and generalizes the energy method. First, starting from FDTD update equations, we develop a self-contained mathematical model for a region with arbitrary permittivity, permeability and conductivity. The model is in the form of a discrete time dynamical system. The magnetic field tangential to the boundary is taken as input, while the electric field tangential to the boundary is taken as output. Through this form, we reveal that an FDTD model can be interpreted as a dynamical system which is dissipative when $\dt$ satisfies a generalized CFL condition. If this condition is not met, the system can generate energy on its own, leading to an unstable simulation. We thus establish a  connection between FDTD and the elegant theory of dissipative systems, which is a novel result. We believe that this connection will greatly benefit the FDTD community, since the theory of dissipative systems has been extremely successful in control theory for ensuring the stability of interconnected systems. This key result sets the basis for a powerful FDTD stability theory, where each part of a given FDTD setup (standard meshes, boundary conditions, interpolation schemes, ...) is interpreted as a subsystem, and is required to be dissipative. Since the connection of dissipative systems is dissipative~\cite{willems1972dissipative,jnl-2007-tadvp-fundamentals}, this will ensure the stability of the FDTD algorithm resulting from the connection of the subsystems. The proposed theory has numerous advantages. It simplifies stability proofs, since conditions can be imposed on \emph{each subsystem individually}, rather than on the whole coupled algorithm. Stability proofs are thus made modular and ``reusable'': once a given FDTD model (e.g., an advanced boundary condition) has been deemed to be dissipative, it can be combined to any other dissipative FDTD subsystem with guaranteed stability. The proposed approach also naturally provides the CFL stability limit of the resulting scheme, which will be the most restrictive CFL limit of the individual subsystems. Finally, the theory is intuitive, since it is based on the concept of energy, familiar to most scientists. The proposed theory is presented in 2D, for the sake of clarity. An extension to 3D is feasible and is currently under development.

As an example of application, the proposed theory is used to derive the subgridding algorithm which is stable by construction, and has several desirable features. Material traverse is supported for both dielectrics and highly conductive materials, which is a limitation of other stable subgridding methods~\cite{wang2010analysis}. The proposed method has low reflections, and avoids non-rectangular cells~\cite{xiao2007three}, finite element concepts~\cite{collino2006conservative} and Withney forms~\cite{chilton2007conservative,venkatarayalu2007stable}. Corners are natively supported without any special treatment~\cite{thoma1996consistent} nor L-shaped cells~\cite{xiao2007three}. Ultimately, the proposed algorithm just consists of a compact FDTD-like update equation for the edges between coarse and fine mesh, and is thus easy to implement. This update equation is provided explicitly for an arbitrary integer refinement ratio $r$, while several previous works~\cite{thoma1996consistent,xiao2007three} provide the update weights only for specific refinements (typically $r=2$ or $3$), leaving the derivation of other cases to the Reader.

The paper is organized as follows. In Sec.~\ref{sec:fdtdsystem}, we cast the FDTD update equations for a rectangular region into the form of a dynamical system with suitable inputs and outputs. In Sec.~\ref{sec:dissipative}, we show that FDTD equations can be interpreted as a dissipative system, and propose the new stability theory. The theory is applied to derive a stable subgridding algorithm in Sec.~\ref{sec:subgridding}, followed by numerical results in Sec.~\ref{sec:numerical_examples}.

\section{Discrete Time Dynamical Model for a 2D FDTD Region}
\label{sec:fdtdsystem}

The goal of this section is to cast the FDTD equations for a 2D region into the form of a discrete time dynamical model. The model shall be self-contained, involving only field samples from the nodes belonging to the region. This goal will be achieved by introducing suitable magnetic field samples at the boundaries. 

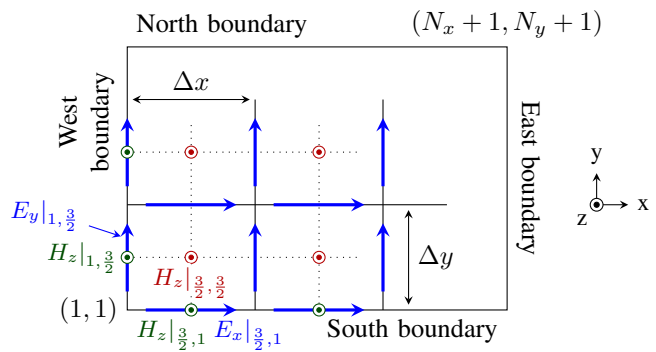
\begin{figure}[t]
	\centering
	\begin{tikzpicture}
	[sgrid/.style={dotted},
	arrE/.style={very thick,->,Ecol,>=stealth,shorten >=7pt, shorten <=7pt},
	labelH/.style={font=\fontsize{\sizefont}{\sizefont}\selectfont, Hcol},
	labelHV/.style={font=\fontsize{\sizefont}{\sizefont}\selectfont, HV},
	arr/.style={thick,->,>=stealth,shorten >=7pt, shorten <=7pt},
	labelE/.style={font=\fontsize{\sizefont}{\sizefont}\selectfont, Ecol},
	label/.style={font=\fontsize{\sizefont}{\sizefont}\selectfont}
	]
	\pgfmathsetmacro{\sizefont} {9};
	\pgfmathsetmacro{\W}{1.7};
	\pgfmathsetmacro{\H}{1.4};
	\pgfmathsetmacro{\extraW}{0.8};
	\pgfmathsetmacro{\extraH}{0.7};
	\pgfmathsetmacro{\Rbig}{0.08};
	\pgfmathsetmacro{\Rsmall}{0.03};
	
	
	\draw[] (0,0) rectangle (2.5*\W+\extraW, 2*\H+\extraH);
	\draw (\W,0) -- (\W,2*\H);
	\draw (2*\W,0) -- (2*\W,2*\H);
	\draw (0,\H) -- (2.5*\W,\H);
	
	\draw [sgrid] (0, 0.5*\H) -- (1.8*\W,0.5*\H);
	\draw [sgrid] (0, 1.5*\H) -- (1.8*\W,1.5*\H);
	\draw [sgrid] (0.5*\W, 0) -- (0.5*\W,1.8*\H);
	\draw [sgrid] (1.5*\W, 0) -- (1.5*\W,1.8*\H);
	
	\draw [arrE] (0,0) -- (1*\W, 0);
	\draw [arrE] (1*\W,0) -- (2*\W, 0);
	\draw [arrE] (0,\H) -- (1*\W, \H);
	\draw [arrE] (1*\W,\H) -- (2*\W, \H);
	\draw [arrE] (0,0) -- (0, \H);
	\draw [arrE] (\W,0) -- (\W, \H);
	\draw [arrE] (2*\W,0) -- (2*\W, \H);
	\draw [arrE] (0,\H) -- (0, 2*\H);
	\draw [arrE] (\W,\H) -- (\W, 2*\H);
	\draw [arrE] (2*\W,\H) -- (2*\W, 2*\H);
		
	\draw [Hcol, fill=white](\W/2,\H/2) circle (\Rbig);
	\draw [Hcol, fill=white](1.5*\W,\H/2) circle (\Rbig);
	\draw [Hcol, fill=white](\W/2,1.5*\H) circle (\Rbig);
	\draw [Hcol, fill=white](1.5*\W,1.5*\H) circle (\Rbig);
	
	\draw [fill,Hcol] (\W/2,\H/2) circle (\Rsmall);
	\draw [fill,Hcol] (1.5*\W,\H/2) circle (\Rsmall);
	\draw [fill,Hcol] (\W/2,1.5*\H) circle (\Rsmall);
	\draw [fill,Hcol] (1.5*\W,1.5*\H) circle (\Rsmall);
	
	\draw [HV, fill=white](0.5*\W,0*\H) circle (\Rbig);
	\draw [HV, fill=white](1.5*\W,0*\H) circle (\Rbig);
	\draw [HV, fill=white](0*\W,0.5*\H) circle (\Rbig);
	\draw [HV, fill=white](0*\W,1.5*\H) circle (\Rbig);
	
	\draw [HV, fill](0.5*\W,0*\H) circle (\Rsmall);
	\draw [HV, fill](1.5*\W,0*\H) circle (\Rsmall);
	\draw [HV, fill](0*\W,0.5*\H) circle (\Rsmall);
	\draw [HV, fill](0*\W,1.5*\H) circle (\Rsmall);
	
	\draw [labelH] (0.5*\W, 0.5*\H) node [below] {\contour{white}{$H_z|_{\frac{3}{2},\frac{3}{2}}$}};
	\draw [labelHV](0.35*\W,0*\H) node [below] {$H_z|_{\frac{3}{2},1}$};
	\draw [labelHV](0*\W,0.5*\H) node [left] {$H_z|_{1,\frac{3}{2}}$};
	
	\draw [labelE](0.95*\W,0*\H) node [below] {$E_x|_{\frac{3}{2},1}$};
	
	\draw [->,>=stealth, Ecol] (-0.3*\W,0.8*\H) -- (-0.05*\W,0.7*\H);
	\draw [labelE](-0.3*\W,0.9*\H) node [left] {$E_y|_{1,\frac{3}{2}}$};

	\draw (0,0) node [left] {$(1,1)$};
	\draw (2.5*\W+\extraW, 2*\H+\extraH) node [above] {\contour{white}{$(N_x+1, N_y+1)$}};
	
	\draw [<->, >=stealth, shorten >=2pt, shorten <=2pt] (0,2*\H) -- (\W,2*\H);
	\draw [<->, >=stealth, shorten >=2pt, shorten <=2pt] (2.2*\W,0) -- (2.2*\W,\H);
	
	\draw (\W/2,2*\H) node [above] {$\dx$};
	\draw (2.2*\W,0.5*\H) node [right] {$\dy$};

	\draw (2.5*\W+\extraW, 0) node [below left] {South boundary};
	\draw (0, 2*\H+\extraH) node [above right] {North boundary};
	\draw (0, 2*\H+\extraH) node [above left, rotate=90] {\begin{tabular}{c} West\\boundary\end{tabular}};
	\draw (2.5*\W+\extraW, \H+0.5*\extraH) node [above, rotate=270] {East boundary};

	\pgfmathsetmacro{\axisOffsetx}{0.7*\W};
	\pgfmathsetmacro{\axisOffsety}{\H};
	\draw[->,black,>=stealth] (2.5*\W+\extraW+\axisOffsetx, 0+\axisOffsety) -- (2.5*\W+\extraW+\axisOffsetx, 0.3*\H+\axisOffsety);
	\draw[->,black,>=stealth] (2.5*\W+\extraW+\axisOffsetx, 0+\axisOffsety) -- (2.5*\W+\extraW+\axisOffsetx + 0.3*\H, 0+\axisOffsety);
	\draw [fill=white] (2.5*\W+\extraW+\axisOffsetx, \axisOffsety) circle (0.05*\W);
	\draw [fill=black] (2.5*\W+\extraW+\axisOffsetx, \axisOffsety) circle (0.02*\W);
	\draw[label] (2.5*\W+\extraW+\axisOffsetx + 0.3*\H, 0+\axisOffsety) node [right] {x};
	\draw[label] (2.5*\W+\extraW+\axisOffsetx, 0+\axisOffsety+ 0.3*\H) node [above] {y};
	\draw[label] (2.5*\W+\extraW+\axisOffsetx, 0+\axisOffsety) node [below left] {z};
	
	\end{tikzpicture}
	\caption{Graphical representation of the 2D region considered in Sec.~\ref{sec:fdtdsystem}. The hanging variables introduced on the four boundaries are denoted in \hvcolorname.}
	\label{fig:cell}
\end{figure}

We consider the 2D rectangular region shown in Fig.~\ref{fig:cell}, operating in a TE mode with components $E_x$, $E_y$, and $H_z$. The region is discretized with a uniform rectangular grid with $N_x~\times~N_y$ cells, of width $\dx$ and height $\dy$. In addition to the field samples used by standard FDTD, we also sample the $H$ field on the four boundaries of the region. These additional samples will be referred to as \emph{hanging variables}~\cite{venkatarayalu2007stable}, and will allow us to:
\begin{enumerate}
\item develop a self-contained model for the region, which does not involve field samples beyond its boundaries;
\item derive an expression for the energy absorbed from each boundary;
\item connect the FDTD grid to other subsystems while maintaining stability.
\end{enumerate}

To keep the notation compact, we collect all $E_x$ and $E_y$ samples into column vectors $\vect{E}_x^n$ and $\vect{E}_y^n$, of size $N_{E_x}=N_x(N_y+1)$ and $N_{E_y}=(N_x+1) N_y$, respectively. The $H_z$ samples at the internal nodes are collected into column vector $\vect{H}_z^{n+\half}$ of size $N_{H_z} = N_x N_y$. The hanging variables on the South boundary of the region are collected into the $N_x \times 1$ vector
\begin{equation}
 \vect{H}_{S}^{n+\half} = 
 \begin{bmatrix}
 	H_z|_{1+\half,1}^{n+\half} & \hdots & H_z|_{N_x+\half,1}^{n+\half}
 \end{bmatrix}^T\,.
\end{equation} 
Similarly, the hanging variables on the North, East and West sides are cast into column vectors 
  $\vect{H}_{N}^{n+\half}$, $\vect{H}_{E}^{n+\half}$ and $\vect{H}_{W}^{n+\half}$,  respectively.

\subsection{State Equation for Each Node}

The dynamical model for the region consists of an update equation for each E and H sample, excluding the hanging variables. Those variables will indeed be eliminated once the region is connected to the surrounding subsystems or to some boundary conditions.

\subsubsection{Internal $H_z$ Nodes}
For these nodes, we use a standard FDTD update~\cite{Gedney}
\begin{multline}
	\dx \dy \frac{\mu}{\dt} H_z|_{i+\half, j+\half}^{n+\half} = 
			\dx \dy \frac{\mu}{\dt} H_z|_{i+\half, j+\half}^{n-\half} \\
			- \dx E_x|_{i+\half,j}^n + \dx E_x|_{i+\half,j+1}^n \\
			+ \dy E_y|_{i,j+\half}^n - \dy E_y|_{i+1,j+\half}^n 
	\label{eq:updateH}
\end{multline}
for $i=1,\dots,N_x$ and $j=1,\dots,N_y$. In~\eqref{eq:updateH}, $\mu$ denotes the average permittivity on the edge where $H_z|_{i+\half, j+\half}^{n+\half}$ is sampled. Subscripts are omitted from $\mu$ in order to simplify the notation. Equations~\eqref{eq:updateH} can be written in matrix form as
\begin{equation}
	\DA \frac{\Dm}{\dt} \vect{H}_z^{n+\half} = 
	\DA \frac{\Dm}{\dt} \vect{H}_z^{n-\half}
	+ \Gy \Dlx \vect{E}_x^n - \Gx \Dly \vect{E}_y^n\,,
	\label{eq:updateHmat}
\end{equation}
where $\DA$ is a diagonal matrix containing the area of the primary cells, $\Dm$ is a diagonal matrix containing the average permittivity on each edge of the secondary grid. Diagonal matrices
\begin{align}
	\Dlx = \dx \matr{I}_{N_{E_x}}  && 	\Dly = \dy \matr{I}_{N_{E_y}} 
	\label{eq:DlxDly}
\end{align} 
contain the length of the $x$- and $y$-directed edges of the primary grid, respectively. With $\matr{I}_m$, we denote the $m\times m$ identity matrix. Matrix $\Gx$ is the discrete derivative operator along $x$, which can be written as
\begin{equation}
	\matr{G}_x = \matr{I}_{N_y} \otimes \matr{W}_{N_x}\,,
	\label{eq:Gx}
\end{equation}
where $\otimes$ is the Kronecker's product~\cite{brewer} and $\matr{W}_{N_x}$ is the $N_x \times (N_x +1)$ matrix~\cite{denecker-statespace}
\begin{equation}
	\matr{W}_{n} = 
		\begin{bmatrix}
			-1 & +1 &    &   &  \\
			   & -1 & +1 &   &  \\
			   &    & \ddots & \ddots & \\
			   &    &        &   -1   & +1
		\end{bmatrix}
		\,.
		\label{eq:W}
\end{equation}
Similarly, 
\begin{equation}
	\Gy = \matr{W}_{N_y} \otimes \matr{I}_{N_x}\,.
	\label{eq:Gy}
\end{equation}

\subsubsection{$E_x$ Nodes}

For the $E_x$ nodes that fall strictly inside the region, we use a standard FDTD update~\cite{Gedney}
\begin{multline}
	\dx \dy \left( \frac{\eps_x}{\dt} + \frac{\sigma_x}{2} \right) E_x|_{i+\half,j}^{n+1} = 
		\dx \dy \left( \frac{\eps_x}{\dt} - \frac{\sigma_x}{2} \right) E_x|_{i+\half,j}^{n} \\
		+ \dx H_z|_{i+\half,j+\half}^{n+\half} - \dx H_z|_{i+\half,j-\half}^{n+\half} 
	\label{eq:updateEx}
\end{multline}
for $i = 1,\dots,N_x$ and $j = 2,\dots,N_y$. In~\eqref{eq:updateEx}, $\eps_x$ and $\sigma_x$ denote, respectively, the average permittivity and average conductivity on the corresponding edge. While the common factor $\dx$ could be eliminated from~\eqref{eq:updateEx}, it is kept as it will be useful later. 

For the $E_x$ nodes on the South boundary, the standard FDTD update equation involves a $H_z$ sample outside the region. In order to avoid this, we apply the finite difference approximation to the half step of the secondary grid between nodes $\left( i+\half,1+\half \right)$ and $\left( i+\half,1 \right)$, making use of the hanging variables. In this way, we obtain an update equation that involves only field samples from the considered region
\begin{multline}
	\dx \frac{\dy}{2} \left( \frac{\eps_x}{\dt} + \frac{\sigma_x}{2} \right) E_x|_{i+\half,1}^{n+1} = \\
		\dx \frac{\dy}{2} \left( \frac{\eps_x}{\dt} - \frac{\sigma_x}{2} \right) E_x|_{i+\half,1}^{n} \\
		+ \dx H_z|_{i+\half,1+\half}^{n+\half} - \dx H_z|_{i+\half,1}^{n+\half}\,,
	\label{eq:updateExS}
\end{multline}
for $i = 1,\dots,N_x$. In~\eqref{eq:updateExS}, $\eps_x$ and $\sigma_x$ denote the material properties of the half cell between nodes $\left( i+\half,1+\half \right)$ and $\left( i+\half,1 \right)$. The update equation for the $E_x$ nodes on the North boundary is obtained similarly, and reads
\begin{multline}
	\dx \frac{\dy}{2} \left( \frac{\eps_x}{\dt} + \frac{\sigma_x}{2} \right) E_x|_{i+\half,N_y+1}^{n+1} = \\
		\dx \frac{\dy}{2} \left( \frac{\eps_x}{\dt} - \frac{\sigma_x}{2} \right) E_x|_{i+\half,N_y+1}^{n} \\
		+ \dx H_z|_{i+\half,N_y+1}^{n+\half} - \dx H_z|_{i+\half,N_y+\half}^{n+\half}\,,
	\label{eq:updateExN}
\end{multline}
for $i = 1,\dots,N_x$.
Relations~\eqref{eq:updateEx}, \eqref{eq:updateExS} and~\eqref{eq:updateExN} can be compactly written as
\begin{multline}
	\Dlx \Dlyp \left( \frac{\Dex}{\dt} + \frac{\Dsx}{2} \right) \vect{E}_x^{n+1} = \\
	\Dlx \Dlyp \left( \frac{\Dex}{\dt} - \frac{\Dsx}{2} \right) \vect{E}_x^{n} 
	- \Dlx \Gy^T \vect{H}_z^{n+\half} \\
	+ 
	\begin{bmatrix}
		\Dlx \matr{B}_{S} & \Dlx \matr{B}_{N}
	\end{bmatrix}	 
	\begin{bmatrix}
		\vect{H}_{S}^{n+\half} \\
		\vect{H}_{N}^{n+\half}
	\end{bmatrix}	 \,,
	\label{eq:updateExmat}
\end{multline}
where
\begin{itemize}
	\item $\Dlyp$ is an $N_{E_x} \times N_{E_x}$ diagonal matrix containing the length of the $y$-directed edges of the secondary grid, including the half-edges of length $\dy/2$ that intersect the North and South boundaries;
	\item $\Dex$ and $\Dsx$ are diagonal matrices storing the permittivity and conductivity on each $x$-directed primary edge, respectively;
	\item $\matr{B}_S$ has all entries set to zero, except for a $-1$ at the intersection of each row associated with a South boundary edge and the column of the corresponding hanging variable in $\vect{H}_S^{n+\half}$;
	\item $\matr{B}_N$ is defined similarly to $\matr{B}_S$, but has a $+1$ on each entry associated with an edge of the North boundary.
\end{itemize}

\subsubsection{$E_y$ Nodes}

The update equations for the $E_y$ nodes are derived with a similar procedure. A standard FDTD update equation is written for the $E_y$ nodes that fall strictly inside the region. For the $E_y$ nodes on the West and East boundaries, instead, finite differences are applied on a half grid step, similarly to~\eqref{eq:updateExS} and~\eqref{eq:updateExN}. This process leads to 
\begin{multline}
	\Dly \Dlxp \left( \frac{\Dey}{\dt} + \frac{\Dsy}{2} \right) \vect{E}_y^{n+1} = \\
	\Dly \Dlxp \left( \frac{\Dey}{\dt} - \frac{\Dsy}{2} \right) \vect{E}_y^{n} 
	+ \Dly \Gx^T \vect{H}_z^{n+\half} \\
	+ 
	\begin{bmatrix}
		\Dly \matr{B}_{W} & \Dly \matr{B}_{E}
	\end{bmatrix}	 
	\begin{bmatrix}
		\vect{H}_{W}^{n+\half} \\
		\vect{H}_{E}^{n+\half}
	\end{bmatrix}	 \,,
	\label{eq:updateEymat}
\end{multline}
where
\begin{itemize}
	\item $\Dlxp$ is an $N_{E_y} \times N_{E_y}$ diagonal matrix containing the length of the $x$-directed edges of the secondary grid, including the half-edges of length $\dx/2$ that intersect the East and West boundaries;
	\item $\Dey$ and $\Dsy$ are diagonal permittivity and conductivity matrices associated with the $y$-directed edges of the primary grid, respectively;
	\item $\matr{B}_W$ has all entries set to zero, except for a $+1$ at the intersection of each row associated with a West boundary edge and the column of the corresponding hanging variable in $\vect{H}_W^{n+\half}$;
	\item $\matr{B}_E$ is analogous to $\matr{B}_W$, but has a $-1$ on each entry associated with an East boundary edge.
\end{itemize}

\subsection{Descriptor System Formulation}

The update equations derived in the previous sections form a complete dynamical model for the rectangular region
\begin{subequations}
\begin{eqnarray}
	(\matr{R} + \matr{F}) \vect{x}^{n+1} & = &  (\matr{R} - \matr{F}) \vect{x}^{n} + \matr{B} \vect{u}^{n+\half}\,, \label{eq:sys1a} \\
	\vect{y}^n & = & \matr{L}^T \vect{x}^n\,.
	\label{eq:sys1b}
\end{eqnarray}
\end{subequations}
The first equation~\eqref{eq:sys1a} is formed by update equations~\eqref{eq:updateHmat}, \eqref{eq:updateExmat} and~\eqref{eq:updateEymat}, and updates the state vector
\begin{equation}
	\vect{x}^{n} =
	\begin{bmatrix}
		\vect{E}_x^n \\
		\vect{E}_y^n \\
		\vect{H}_z^{n-\half}
	\end{bmatrix}\,,
	\label{eq:x}
\end{equation}
which consists of all electric and magnetic field samples in the region, excluding hanging variables. The input vector $\vect{u}^{n+\half}$ and output vector $\vect{y}^{n}$ are given by
\begin{equation}
	\vect{u}^{n+\half} = 
	\begin{bmatrix}
		\vect{H}_S^{n+\half} \\
		\vect{H}_N^{n+\half} \\
		\vect{H}_W^{n+\half} \\
		\vect{H}_E^{n+\half} \\						
	\end{bmatrix}
	\quad
		\vect{y}^{n} = 
	\begin{bmatrix}
		\vect{E}_S^{n} \\
		\vect{E}_N^{n} \\
		\vect{E}_W^{n} \\
		\vect{E}_E^{n} \\						
	\end{bmatrix}\,.
	\label{eq:uy}
\end{equation}
The input vector contains all hanging variables, i.e. all magnetic field samples on the region boundaries.  The output vector is made by the E samples at the same nodes, which are collected into vectors	$\vect{E}_S^{n}$, $\vect{E}_N^{n}$, $\vect{E}_W^{n}$ and $\vect{E}_E^{n}$. Output equation~\eqref{eq:sys1b} 
extracts these values from the state vector $\vect{x}^n$. The coefficients matrices in~\eqref{eq:sys1a} and~\eqref{eq:sys1b} read
\begin{equation}
	\matr{R} = 
	\begin{bmatrix}
		\Dlx \Dlyp \frac{\Dex}{\dt}		& \matr{0}							& \half \Dlx \Gy^T 			\\
		\matr{0}							& \Dly \Dlxp \frac{\Dey}{\dt}		& -\half \Dly \Gx^T 			\\	
		\half \Gy \Dlx						& -\half \Gx \Dly					& \DA \frac{\Dm}{\dt}
	\end{bmatrix}\,,
	\label{eq:R}
\end{equation} 
\begin{equation}
	\matr{F} = 
	\begin{bmatrix}
		\Dlx \Dlyp \frac{\Dsx}{2}		& \matr{0}							& \half \Dlx \Gy^T 			\\
		\matr{0}							& \Dly \Dlxp \frac{\Dsy}{2}		& -\half \Dly \Gx^T 			\\	
		-\half \Gy \Dlx						& \half \Gx \Dly					& \matr{0}
	\end{bmatrix}\,,
	\label{eq:F}
\end{equation} 
\begin{equation}
	\matr{B} = 
	\begin{bmatrix}
		\Dlx \matr{B}_S		& \Dlx \matr{B}_N		& \matr{0}				& \matr{0}			\\
		\matr{0}			& \matr{0}				& \Dly \matr{B}_W		& \Dly \matr{B}_E	\\
		\matr{0}			& \matr{0}				& \matr{0}				& \matr{0}			\\
	\end{bmatrix}\,,
	\label{eq:B}
\end{equation} 
\begin{equation}
	\matr{L} = 
	\begin{bmatrix}
		-\matr{B}_S			& \matr{B}_N		& \matr{0}				& \matr{0}			\\
		\matr{0}			& \matr{0}			& \matr{B}_W			& - \matr{B}_E		\\
		\matr{0}			& \matr{0}			& \matr{0}				& \matr{0}			\\
	\end{bmatrix}\,.
	\label{eq:L}
\end{equation} 

Equations~\eqref{eq:sys1a}-\eqref{eq:sys1b} define an FDTD-like model for a 2D lossy, source-free region, with possibly non-uniform permittivity and permeability. From a control perspective, this model is a discrete time descriptor system~\cite{dai1989singular}, also known as generalized state-space. From an electrical standpoint, the model is an impedance-type description of the region, because it gives the electric field tangential to the boundary in terms of the tangential magnetic field. In~\cite{denecker-statespace}, a special case of the proposed representation was derived for a lossy region enclosed by PEC boundaries. The proposed developments generalize~\cite{denecker-statespace} in two directions:
\begin{enumerate}
	\item the assumption of a PEC termination is removed. Suitable input-output variables are introduced in such a way that model~\eqref{eq:sys1a}-\eqref{eq:sys1b} can be connected to other subsystems, such as different FDTD grids, boundary conditions, or reduced models;
	\item the proposed formulation will allow us to prove, in the next section, that FDTD equations can be interpreted as a dissipative dynamical system.
\end{enumerate}

\section{FDTD as a Dissipative Discrete Time System}
\label{sec:dissipative}
In this section, we define suitable expressions for the energy stored in the 2D region and the energy absorbed from its four boundaries. These expressions are then used to derive the conditions under which system~\eqref{eq:sys1a}-\eqref{eq:sys1b} is dissipative.

\subsection{Dissipation Inequality}

A discrete time system like~\eqref{eq:sys1a}-\eqref{eq:sys1b} is dissipative when it satisfies the following condition~\cite{byrnes1994losslessness}.
\begin{definition}
\label{def:dissipative}
Dynamic system~\eqref{eq:sys1a}-\eqref{eq:sys1b} is said to be dissipative with supply rate $s(\vect{y}^n, \vect{u}^{n+\half})$ if there exists a nonnegative function $\E(\vect{x}^n)$ with $\E(0) = 0$, called \emph{storage function}, such that 
\begin{equation}
	\E(\vect{x}^{n+1}) - \E(\vect{x}^n) \le s(\vect{y}^n, \vect{u}^{n+\half})
	\label{eq:dissipation}
\end{equation}
for all $\vect{u}^{n+\half}$ and all $n$.
\end{definition}

The storage function $\E(\vect{x}^n)$ can be interpreted as the energy stored in the system at time $n$, while the supply rate $s(\vect{y}^n, \vect{u}^{n+\half})$ is the energy absorbed by the system from its boundaries between time $n$ and $n+1$. Clearly, only dissipative systems can satisfy the \emph{dissipation inequality}~\eqref{eq:dissipation}. Indeed, their stored energy can increase between time $n$ and $n+1$ by \textit{at most} the energy absorbed from the outside world. If this limit is violated, the system is considered active, since it can generate energy on its own. For most practical systems, inequality~\eqref{eq:dissipation} is satisfied strictly because of the presence of losses.

\subsection{Storage Function and Supply Rate}

For system~\eqref{eq:sys1a}-\eqref{eq:sys1b}, we choose as candidate storage function
\begin{equation}
	\E(\vect{x}^n) = \frac{\dt}{2} \left (\vect{x}^n \right) ^T \matr{R} \vect{x}^n\,,
	\label{eq:storage1}
\end{equation}
and as supply rate
\begin{equation}
	s(\vect{y}^n, \vect{u}^{n+\half}) = \dt \frac{\left( \vect{y}^n + \vect{y}^{n+1} \right)^T}{2}  \matr{L}^T \matr{B} \vect{u}^{n+\half}\,.
	\label{eq:supply1}
\end{equation}
Before showing that these functions satisfy~\eqref{eq:dissipation}, we investigate their physical meaning. Substituting~\eqref{eq:R} into~\eqref{eq:storage1}, we get
\begin{multline} 
	\E(\vect{x}^n) = \half \left( \vect{E}_x^n \right)^T \Dlx \Dlyp \Dex \vect{E}_x^n  \\
		+  \half \left( \vect{E}_y^n \right)^T \Dly \Dlxp \Dey \vect{E}_y^n \\
		+ \frac{\dt}{2} \left[ \Gy \Dlx  \vect{E}_x^n  - \Gx \Dly \vect{E}_y^n +  \DA \frac{\Dm}{\dt} \vect{H}_z^{n-\half} \right] \,.
		\label{eq:storage2}
\end{multline}
Since the term between square brackets is the right hand side of~\eqref{eq:updateHmat}, we can reduce~\eqref{eq:storage2} to
\begin{multline} 
	\E(\vect{x}^n) = \half \left( \vect{E}_x^n \right)^T \Dlx \Dlyp \Dex \vect{E}_x^n  \\
		+  \half \left( \vect{E}_y^n \right)^T \Dly \Dlxp \Dey \vect{E}_y^n \\
		+   \half \left(  \vect{H}_z^{n-\half} \right)^T \DA \Dm \vect{H}_z^{n+\half}\,.
		\label{eq:storage3}
\end{multline}
 
\begin{equation}
\int_A \left[ \half  \eps E_x^2(t) + \half  \eps E_y^2(t) + \half  \mu H_z^2(t) \right]dA  \,,
\end{equation}
the continuous expression for the electromagnetic energy per unit height of a 2D TE mode in a region of area $A$. The first term of~\eqref{eq:storage3} is the energy stored in the $x$ component of the electric field in the region. Indeed, the diagonal matrix $\Dlx \Dlyp$ is the area of the half cells around the $E_x$ edges. Similarly, the second and third term in~\eqref{eq:storage3} represent the energy stored in the $E_y$ and $H_z$ components, respectively.

We now discuss the physical meaning of supply rate~\eqref{eq:supply1}. Direct inspection reveals that
\begin{equation}
	\matr{L}^T \matr{B} = 
	\begin{bmatrix}
		- \dx \matr{I}_{N_x}		& \matr{0} 					& \matr{0}  				& \matr{0} \\
		 \matr{0} 					&  + \dx \matr{I}_{N_x}		& \matr{0}  				& \matr{0} \\
		 \matr{0}  					& \matr{0} 					& + \dy \matr{I}_{N_y}		& \matr{0} 			\\
		 \matr{0} 					& \matr{0}  				& \matr{0} 					& - \dy \matr{I}_{N_y}	\\

	\end{bmatrix} \,.
	\label{eq:LB}
\end{equation}
Substituting~\eqref{eq:LB} into~\eqref{eq:supply1}, we obtain
\begin{multline}
	s(\vect{y}^n, \vect{u}^{n+\half}) = 
	- \dt \dx \frac{(\vect{E}_S^n + \vect{E}_S^{n+1})^T}{2} \vect{H}_S^{n+\half} \\
	+ \dt \dx \frac{(\vect{E}_N^n + \vect{E}_N^{n+1})^T}{2} \vect{H}_N^{n+\half} 
	+ \dt \dy \frac{(\vect{E}_W^n + \vect{E}_W^{n+1})^T}{2} \vect{H}_W^{n+\half} \\
	- \dt \dy \frac{(\vect{E}_E^n + \vect{E}_E^{n+1})^T}{2} \vect{H}_E^{n+\half} \,,
	\label{eq:supply2}
\end{multline}
and see that the supply rate is the sum of the energy absorbed by the region from each boundary between time $n$ and $n+1$.
Signs in~\eqref{eq:supply2} are consistent with the direction of the Poynting vector on each boundary.

\subsection{Dissipativity Conditions}

Using the proposed storage function~\eqref{eq:storage1} and supply rate~\eqref{eq:supply1}, we can derive simple dissipativity conditions on the coefficients matrices $\matr{R}$, $\matr{F}$, $\matr{B}$ and $\matr{L}$ in~\eqref{eq:sys1a}-\eqref{eq:sys1b}.
\begin{theorem}
\label{thm:dissipative1}
	If
	\begin{subequations}
	\begin{eqnarray}
		 & \matr{R} = \matr{R}^T >  0\,,  \label{eq:cond1} \\
		 & \matr{F} + \matr{F}^T \ge 0\,, \label{eq:cond2}  \\
		 & \matr{B} =  \matr{L} \matr{L}^T \matr{B}\,,  \label{eq:cond3} 
	\end{eqnarray}
	\end{subequations}
	then system~\eqref{eq:sys1a}-\eqref{eq:sys1b} is dissipative according to Definition~\ref{def:dissipative}, with~\eqref{eq:storage1} as storage function and~\eqref{eq:supply1} as supply rate.
\end{theorem}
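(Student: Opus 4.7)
The plan is to verify the dissipation inequality~\eqref{eq:dissipation} by directly expanding $\E(\vect{x}^{n+1})-\E(\vect{x}^n)$, using the state equation~\eqref{eq:sys1a} to replace the increment $\vect{x}^{n+1}-\vect{x}^n$ by expressions in $(\vect{x}^{n+1}+\vect{x}^n)$ and $\vect{u}^{n+\half}$, and then matching the result against the supply rate~\eqref{eq:supply1}. Before the main computation I would dispose of the easy requirements on the storage function: nonnegativity of $\E(\vect{x}^n)$ and $\E(0)=0$ both follow from condition~\eqref{eq:cond1}, since $\matr{R}=\matr{R}^T>0$ makes~\eqref{eq:storage1} a nonnegative quadratic form vanishing at the origin.

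The core computation would proceed in three short steps. First, the symmetry of $\matr{R}$ and the elementary identity $a^T\matr{R}a-b^T\matr{R}b=(a-b)^T\matr{R}(a+b)$ give
\begin{equation}
\E(\vect{x}^{n+1})-\E(\vect{x}^n) = \frac{\dt}{2}(\vect{x}^{n+1}-\vect{x}^n)^T\matr{R}(\vect{x}^{n+1}+\vect{x}^n).
\end{equation}
Second, I would rearrange~\eqref{eq:sys1a} as $\matr{R}(\vect{x}^{n+1}-\vect{x}^n) = -\matr{F}(\vect{x}^{n+1}+\vect{x}^n)+\matr{B}\vect{u}^{n+\half}$, substitute it, and symmetrize the resulting quadratic in $\matr{F}$ via $v^T\matr{F}v=\half v^T(\matr{F}+\matr{F}^T)v$, obtaining
\begin{multline}
\E(\vect{x}^{n+1})-\E(\vect{x}^n) = -\frac{\dt}{4}(\vect{x}^{n+1}+\vect{x}^n)^T(\matr{F}+\matr{F}^T)(\vect{x}^{n+1}+\vect{x}^n) \\
+ \frac{\dt}{2}(\vect{x}^{n+1}+\vect{x}^n)^T \matr{B}\vect{u}^{n+\half}.
\end{multline}
Third, I would substitute $\vect{y}^n=\matr{L}^T\vect{x}^n$ into~\eqref{eq:supply1} to get $s=\frac{\dt}{2}(\vect{x}^n+\vect{x}^{n+1})^T\matr{L}\matr{L}^T\matr{B}\vect{u}^{n+\half}$, and then use condition~\eqref{eq:cond3} to collapse $\matr{L}\matr{L}^T\matr{B}$ to $\matr{B}$. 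The $\vect{x}$--$\vect{u}^{n+\half}$ cross terms then cancel identically, leaving
\begin{equation}
\E(\vect{x}^{n+1})-\E(\vect{x}^n) - s(\vect{y}^n,\vect{u}^{n+\half}) = -\frac{\dt}{4}(\vect{x}^{n+1}+\vect{x}^n)^T(\matr{F}+\matr{F}^T)(\vect{x}^{n+1}+\vect{x}^n),
\end{equation}
which is nonpositive by~\eqref{eq:cond2}. This is precisely~\eqref{eq:dissipation}.

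Rather than a technical obstacle, the key point is to recognize how the three hypotheses play complementary roles: condition~\eqref{eq:cond1} enables the telescoping of the stored energy across time, the $(\matr{R}\pm\matr{F})$ splitting in~\eqref{eq:sys1a} together with the projection identity~\eqref{eq:cond3} makes the state--input cross terms in the energy difference line up exactly with those of the midpoint supply rate, and condition~\eqref{eq:cond2} fixes the sign of the residual quadratic form. Once these three roles are identified, the algebra is essentially forced, and the residual $-\frac{\dt}{4}(\vect{x}^{n+1}+\vect{x}^n)^T(\matr{F}+\matr{F}^T)(\vect{x}^{n+1}+\vect{x}^n)$ can eventually be interpreted as the Joule dissipation in the region.
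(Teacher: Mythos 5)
Your proposal is correct and follows essentially the same route as the paper's own proof: telescoping the quadratic storage function via the symmetry of $\matr{R}$, substituting $\matr{R}(\vect{x}^{n+1}-\vect{x}^n)=-\matr{F}(\vect{x}^{n+1}+\vect{x}^n)+\matr{B}\vect{u}^{n+\half}$ from the state equation, using~\eqref{eq:cond3} to cancel the state--input cross terms against the supply rate, and concluding from the sign of $\matr{F}+\matr{F}^T$. Your write-up is in fact slightly cleaner than the paper's (whose intermediate equation~\eqref{eq:RF} carries a sign typo on the $\matr{B}\vect{u}^{n+\half}$ term that is silently corrected in the following step), but the argument is the same.
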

\begin{proof} 
Condition~\eqref{eq:cond1} makes storage function~\eqref{eq:storage1}  nonnegative for all $\vect{x}^n$, as required by Definition~\ref{def:dissipative}. Next, we show that if~\eqref{eq:cond2} and~\eqref{eq:cond3} hold, the dissipation inequality~\eqref{eq:dissipation} will hold as well. Substituting~\eqref{eq:storage1} and~\eqref{eq:supply1} into~\eqref{eq:dissipation}, we obtain
\begin{equation}
	\left(\vect{x}^{n+1} \right) ^T \matr{R} \vect{x}^{n+1} - \left(\vect{x}^n \right) ^T \matr{R} \vect{x}^n - \left( \vect{y}^n + \vect{y}^{n+1} \right)^T \matr{L}^T \matr{B} \vect{u}^{n+\half} \le 0 \nonumber\,.
\end{equation}
This inequality can be rewritten as
\begin{multline}
	\left(\vect{x}^{n+1} \right) ^T \matr{R} \left( \vect{x}^{n+1} - \vect{x}^n \right)
	+ \left( \vect{x}^{n+1} - \vect{x}^n \right)^T \matr{R} \vect{x}^n \\
	- \left( \vect{y}^n + \vect{y}^{n+1} \right)^T \matr{L}^T \matr{B} \vect{u}^{n+\half} \le 0 \label{eq:dissipproof1}
\end{multline}
From~\eqref{eq:sys1a}, we have that
\begin{equation}
	\matr{R} \left( \vect{x}^{n+1} - \vect{x}^n \right) = - \matr{F} \left( \vect{x}^{n+1} + \vect{x}^n \right) - \matr{B} \vect{u}^{n+\half}
	\label{eq:RF}
\end{equation}
Using~\eqref{eq:RF}, the terms $\matr{R} \left( \vect{x}^{n+1} - \vect{x}^n \right)$ and $\left( \vect{x}^{n+1} - \vect{x}^n \right)^T \matr{R}$ in~\eqref{eq:dissipproof1} can be expressed in terms of $\matr{F}$
\begin{multline}
	- \left(\vect{x}^{n+1} \right) ^T \matr{F} \left( \vect{x}^{n+1} + \vect{x}^n \right)
	- \left( \vect{x}^{n+1} + \vect{x}^n \right)^T \matr{F}^T \vect{x}^n \\
	+ \left( \vect{x}^{n+1} \right)^T \matr{B} \vect{u}^{n+\half}  
	+  \left( \vect{u}^{n+\half} \right)^T \matr{B}^T \vect{x}^n \\ 
	- \left( \vect{x}^n \right)^T  \matr{L} \matr{L}^T \matr{B} \vect{u}^{n+\half} 
	- \left( \vect{x}^{n+1} \right)^T \vect{L} \vect{L}^T \vect{B} \vect{u}^{n+\half} \le 0 \label{eq:dissipproof2}
\end{multline}
Under~\eqref{eq:cond3}, we can finally rewrite~\eqref{eq:dissipproof2} as
\begin{equation}
	\left(\vect{x}^{n+1} +\vect{x}^n \right) ^T 
	\left( \matr{F} +\matr{F}^T \right)
	\left(\vect{x}^{n+1} +\vect{x}^n \right) \ge 0
\end{equation}
which is clearly satisfied if~\eqref{eq:cond2} holds.
\end{proof}

We now investigate the physical meaning of the three dissipativity conditions, starting from~\eqref{eq:cond3}, which can be directly verified by combining~\eqref{eq:LB} and~\eqref{eq:L}. This condition holds because the region inputs and outputs in~\eqref{eq:uy} are sampled at the same nodes, and is reminiscent of a similar relation which holds for linear circuits under the impedance representation~\cite{passivemacromodeling}. 
Condition~\eqref{eq:cond2} is related to losses, and reads
\begin{equation}
	\matr{F} + \matr{F}^T = 
	\begin{bmatrix}
		\Dlx \Dlyp \Dsx		& \matr{0}			& \matr{0}	\\
		\matr{0}			& \Dly \Dlxp \Dsy	& \matr{0}	\\	
		\matr{0}			& \matr{0}			& \matr{0}
	\end{bmatrix}
	\ge 0	\,.
	\label{eq:cond2b}
\end{equation} 
Since $\Dlx$, $\Dly$, $\Dlxp$ and $\Dlyp$ are diagonal with positive elements only, \eqref{eq:cond2b} boils down to
\begin{eqnarray}
	\Dsx & \ge & 0\,,	\\
	\Dsy & \ge & 0\,.	
\end{eqnarray}
As one may expect, the average conductivity on each primary edge must be non-negative in a dissipative system.
Condition~\eqref{eq:cond1} is the most interesting, and is a generalized CFL criterion. From~\eqref{eq:R}, we see that $\matr{R} = \matr{R}^T$ by construction. With the Schur's complement~\cite{Boy94}, we can transform~\eqref{eq:cond1} into
\begin{align}
& 	\Dlx \Dlyp \Dex > 0	\,,	
\label{eq:cond1a} \\
& 	\Dly \Dlxp \Dey > 0	\,,	
\label{eq:cond1b} \\
& 	\matr{S} \matr{S}^T < \frac{4}{\dt^2} \matr{I}_{N_{H_z}}\,,	
\label{eq:cond1c}
\end{align}
where
\begin{equation}
	\matr{S} = \DA^{-\half} \Dm^{-\half} 
		\begin{bmatrix}
			\Gy \Dlx^{\half} \Dlyp^{-\half} \Dex^{-\half} 	& 	-\Gx \Dly^{\half} \Dlxp^{-\half} \Dey^{-\half} 
		\end{bmatrix}\,.
		\label{eq:S}
\end{equation}
Inequalities~\eqref{eq:cond1a} and~\eqref{eq:cond1b} are clearly satisfied by construction. If we denote with $s_k$ the singular values of~\eqref{eq:S}, inequality~\eqref{eq:cond1c} will hold if
\begin{equation}
	\dt < \frac{2}{s_k} \quad \forall k\,.
	\label{eq:dtlimit}
\end{equation}
This condition sets an upper bound on the FDTD time step, and is analogous to the generalized CFL constraint derived in~\cite{denecker-statespace} for an FDTD grid terminated on PEC boundaries. From~\eqref{eq:S}, we indeed see that the time step upper limit depends on grid size, permittivity and permeability. Using a time step which violates~\eqref{eq:dtlimit} has two consequences. First, the discretization of Maxwell's equations leads to an active discrete time model, even if the real physical system has positive losses everywhere, and should therefore be dissipative. The numerical model is thus inconsistent with the actual physics. Second, being able to generate energy on its own, model~\eqref{eq:sys1a}-\eqref{eq:sys1b} can lead to divergent transient simulations. Even when such model is connected to other dissipative subsystems, its ability to generate energy on its own can destabilize the whole simulation. The proposed theory provides a new, deeper understanding on the root causes of FDTD instability, and is able to pinpoint \emph{which} part of an FDTD setup is responsible for it. It is hoped that this new explanation will facilitate the development of powerful FDTD schemes with guaranteed stability. 

\subsection{Relation to CFL Stability Limit}

The relation between~\eqref{eq:dtlimit} and the CFL limit can be further understood if we apply~\eqref{eq:dtlimit} to a single cell. The purpose of~\eqref{eq:dtlimit} is to make storage function~\eqref{eq:storage1} non-negative. A sufficient condition for this is to require the energy stored in \emph{each} primary cell to be positive. This condition can be derived by applying~\eqref{eq:cond1c} to a  single primary cell. 
We consider the primary cell extending from node $(i,j)$ to node $(i+1, j+1)$.
The coefficient matrices for this special case can be obtained from the formulas in Sec.~\ref{sec:fdtdsystem} by setting $N_x = N_y = 1$
\begin{align}
	\Dlx & = \dx \matr{I}_2  					& 		\Dly & = \dy \matr{I}_2 \label{eq:cell1} \\
	\Dlxp & = \frac{\dx}{2} \matr{I}_2  		& 		\Dlyp & = \frac{\dy}{2} \matr{I}_2 \label{eq:cell2} \\
	\Gx &= \Gy = \begin{bmatrix}
				-1 & 1
			\end{bmatrix}
												&  \DA & = \dx \dy \label{eq:cell3} 
\end{align}
The permeability matrix $\Dm = \mu|_{i+\half, j+\half}$ is the average permeability on the secondary edge, while the permittivity matrices
\begin{align}
	\Dex & = 	\begin{bmatrix}
					\!\eps_x|_{i+\half, j}\!\!\!	& 0 						\\
					0						& \!\!\! \eps_x|_{i+\half, j+1}\!
				\end{bmatrix}
	&
	\Dey & = 	\begin{bmatrix}
					\! \eps_y|_{i, j+\half} \!\!\!	& 0 						\\
					0						& \!\!\! \eps_y|_{i+1, j+\half} \!
				\end{bmatrix}				
				\label{eq:cell4} 			
\end{align} 
contain the average permittivity on the four primary edges.
Substituting~\eqref{eq:cell1}-\eqref{eq:cell4} into~\eqref{eq:cond1c}, we obtain
\begin{multline}
	\dt < \Bigg [ \frac{1}{2 \dx^2 \mu|_{i+\half, j+\half}} \left( \frac{1}{\eps_y|_{i, j+\half}} + \frac{1}{\eps_y|_{i+1, j+\half} } \right) \\
		+  \frac{1}{2 \dy^2 \mu|_{i+\half, j+\half}} \left( \frac{1}{\eps_x|_{i+\half, j}} + \frac{1}{\eps_x|_{i+\half, j+1} } \right)  \Bigg ]^{-\half}\,,
		\label{eq:CFLnonuniform}
\end{multline}
which is a generalized CFL condition. For a uniform medium with permittivity $\eps$ and permeability $\mu$, \eqref{eq:CFLnonuniform} reduces to
\begin{equation}
	\dt < \frac{1}{\sqrt{\frac{1}{\eps\mu}} \sqrt{\frac{1}{\dx^2} + \frac{1}{\dy^2}}  }\,,
	\label{eq:CFL2d}
\end{equation}
the CFL limit of 2D~FDTD. This derivation confirms that~\eqref{eq:cond1} is a generalized CFL condition, here reinterpreted in the context of dissipation.

\subsection{Application to Stability Analysis}
The proposed dissipation theory can be effectively used to investigate and enforce the stability of FDTD algorithms. Most FDTD setups consist of an interconnection of FDTD subsystems. In the simplest scenario, a uniform FDTD grid is connected to some boundary conditions. In most advanced scenarios, one may want to couple a main FDTD grid to refined grids, reduced models, lumped elements, or models from other numerical techniques, such as finite elements, integral equations or ray tracing. 

Ensuring stability of these hybrid schemes can be very challenging, since stability is a property of the overall scheme, rather than of its individual subsystems. By invoking the concept of dissipation, we can instead achieve stability in an easy and modular way. Each part is seen as an FDTD subsystem and required to satisfy dissipativity conditions~\eqref{eq:cond1}-\eqref{eq:cond3}. Since the connection of dissipative systems is dissipative by construction~\cite{jnl-2007-tadvp-fundamentals}, the overall method will be guaranteed to be stable. The proposed theoretical framework generalizes the so-called energy method~\cite{edelvik2004general}, and has numerous advantages over the state of the art:
\begin{enumerate}
	\item non-uniform problems can be handled, unlike in the von Neumann analysis~\cite{taflove2005computational};
	\item stability conditions can be given on each subsystem separately, unlike in the iteration method~\cite{Gedney}, which requires the analysis of the iteration matrix of the whole scheme. This makes stability analysis modular and thus simpler;
	\item once some given FDTD models have been proven dissipative, they can be arbitrarily interconnected without having to carry out further stability proofs. With the iteration method, when a single part of a coupled scheme changes, the whole proof must be revised;
	\item the CFL limit of the resulting scheme can be easily determined by applying~\eqref{eq:cond1a}, \eqref{eq:cond1b} or~\eqref{eq:cond1c} to each subsystem and taking the most restrictive CFL limit;
	\item the stability framework is intuitive, since it is based on the fundamental physical concept of energy dissipation.
\end{enumerate}

\section{Application: Stable FDTD Subgridding}
\label{sec:subgridding}
We demonstrate the proposed theory by deriving a subgridding algorithm which is stable by construction, easy to implement and supports an arbitrary grid refinement ratio. The goal is to derive stable update equations for a setup where one or more fine grids are embedded in a main coarse grid. Without loss of generality, we consider the case where a coarse grid with cell size $\dx \times \dy$ hosts a single fine grid with cell size $\dx/r \times \dy/r$, where $r$ is an arbitrary integer. The algorithm will ultimately consist of conventional FDTD equations to update the fields that fall strictly inside the two grids, and a special update equation to update the fields on the edges at the grid transition. To derive the method, it is sufficient to consider the interface between a single coarse cell and the corresponding fine cells, as shown in Fig.~\ref{fig:connectionA}. In the figure a virtual gap has been opened between the two grids for clarity. Without loss of generality, we consider a refinement in the positive $x$ direction. The other three cases can be derived in the same way. The coarse cell under consideration is centered at node $(i+\half,j-\half)$ and the corresponding fine cells at nodes $(\ihat+\half, \jhat + \half)$, ... , $(\ihat+r-\half, \jhat+\half)$, where coordinates $(\ihat, \jhat)$ correspond to the same physical location as $(i, j)$ in the coarse grid. Superscript "$\string^$" denotes variables related to the fine grid. 

A subgridding algorithm can be interpreted as a the result of the connection of the three subsystems, as shown in Fig.~\ref{fig:connectionB}. Two of those subsystems correspond to the coarse and fine grids to be coupled. The third subsystem represents the interpolation rule which is used to relate the fields on the boundaries of the two grids, that are sampled with different resolution.

\begin{figure}[t]
	
		\centering
	\begin{tikzpicture}
	[
	arr/.style={very thick,->,Ecol,>=stealth,shorten >=10pt, shorten <=10pt},
	arrb/.style={very thick,->,Ecol,>=stealth,shorten >=10pt, shorten <=10pt},
	labelE/.style={font=\fontsize{\sizefont}{\sizefont}\selectfont, Ecol},
	labelH/.style={font=\fontsize{\sizefont}{\sizefont}\selectfont, Hcol},
	labelHV/.style={font=\fontsize{\sizefont}{\sizefont}\selectfont, HV},
	label/.style={font=\fontsize{\sizefont}{\sizefont}\selectfont}
	]

	\pgfmathsetmacro{\sizefont} {9}
	
	\pgfmathsetmacro{\W}{2.7}
	\pgfmathsetmacro{\H}{1.6}
	\pgfmathsetmacro{\dH}{0}
	\pgfmathsetmacro{\dHb}{0}
	\pgfmathsetmacro{\Rbig}{0.08}
	\pgfmathsetmacro{\Rsmall}{0.03}
	\pgfmathsetmacro{\extraW}{\H/4}
	\pgfmathsetmacro{\extraH}{\H/4}
	\pgfmathsetmacro{\axisOffsetx}{\W/4}
	\pgfmathsetmacro{\axisOffsety}{-0.75*\H}
	
	\pgfmathsetmacro{\gap}{0.2};

	\draw (-\extraW,-\gap) -- (2*\W+\extraW,-\gap);
	\draw (-\extraW,0) -- (2*\W+\extraW,0);
	\draw (-\extraW,\H) -- (2*\W+\extraW,\H);
	
	\draw (0, \H+\extraH) -- (0,0);
	\draw (0,-\gap) -- (0, -1*\H-\extraH-\gap);
	\draw (2*\W, \H+\extraH) -- (2*\W, 0);
	\draw (2*\W, -\gap) -- (2*\W, -1*\H-\extraH-\gap);
	
	\draw (\W, \H+\extraH) -- (\W, 0);
	\draw (2*\W, \H+\extraH) -- (2*\W, 0);

	\draw [Hcol, fill=white] (1*\W, -1*\H-\gap) circle (\Rbig);
	\draw [Hcol, fill=white] (0.5*\W, 0.5*\H) circle (\Rbig);
	\draw [Hcol, fill=white] (1.5*\W, 0.5*\H) circle (\Rbig);
	
	\draw [fill, Hcol] (1*\W, -1*\H-\gap) circle (\Rsmall);
	\draw [fill, Hcol] (0.5*\W, 0.5*\H) circle (\Rsmall);
	\draw [fill, Hcol] (1.5*\W, 0.5*\H) circle (\Rsmall);

	\draw[Hcol] (\W/2,\H/2) node [above] {$\hat{H}_z|_{\ihat+\half, \jhat+\half}$};
	\draw[Hcol] (1.5*\W,\H/2) node [above] {$\hat{H}_z|_{\ihat+\frac{3}{2}, \jhat+\half}$};

	\draw[labelHV] (0.25*\W,\dH) node [above] {$\hat{H}_z|_{\ihat+\half, \jhat}$};
	\draw[labelHV] (1.25*\W,\dH) node [above] {$\hat{H}_z|_{\ihat+\frac{3}{2}, \jhat}$};
	
	\draw[labelH] (1*\W,-1*\H-\gap) node [above] {$H_z|_{i+\half, j-\half}$};
	\draw[labelHV] (1*\W,-\dHb-\gap) node [below] {$H_z|_{i+\half, j}$};
	
	\draw[labelE] (0.75*\W,\dH) node [above] {$\hat{E}_x|_{\ihat+\half, \jhat}$};
	\draw[labelE] (1.75*\W,\dH) node [above] {$\hat{E}_x|_{\ihat+\frac{3}{2}, \jhat}$};
	\draw[labelE] (1.65*\W,-\dHb-\gap) node [below] {$E_x|_{i+\half, j}$};
	
	\draw[arrb] (0,-\dHb-\gap) -- (2*\W,-\dHb-\gap);
	\draw[arr] (0,\dH) -- (\W,\dH);
	\draw[arr] (\W,\dH) -- (2*\W,\dH);
	
	\draw [HV, fill=white] (0.5*\W, \dH) circle (\Rbig);
	\draw [HV, fill=white] (1.5*\W, \dH) circle (\Rbig);
	\draw [HV, fill=white] (1*\W, -\dHb-\gap) circle (\Rbig);
	
	\draw [fill, HV] (0.5*\W, \dH) circle (\Rsmall);
	\draw [fill, HV] (1.5*\W, \dH) circle (\Rsmall);
	\draw [fill, HV] (1*\W, -\dHb-\gap) circle (\Rsmall);

	\draw[label] (0,0) node [above left] {$(\ihat,\jhat)$};
	\draw[label] (0,-\gap) node [below left] {$(i,j)$};
	
	\draw[->,black,>=stealth] (2*\W+\extraH+\axisOffsetx, 0+\axisOffsety) -- (2*\W+\extraH+\axisOffsetx, 0.2*\H+\axisOffsety);
	\draw[->,black,>=stealth] (2*\W+\extraH+\axisOffsetx, 0+\axisOffsety) -- (2*\W+\extraH+\axisOffsetx + 0.2*\H, 0+\axisOffsety);
	\draw [fill=white] (2*\W+\extraH+\axisOffsetx, \axisOffsety) circle (0.025*\W);
	\draw [fill=black] (2*\W+\extraH+\axisOffsetx, \axisOffsety) circle (0.01*\W);
	\draw[label] (2*\W+\extraH+\axisOffsetx + 0.2*\H, 0+\axisOffsety) node [right] {x};
	\draw[label] (2*\W+\extraH+\axisOffsetx, 0+\axisOffsety+ 0.2*\H) node [above] {y};
	\draw[label] (2*\W+\extraH+\axisOffsetx, 0+\axisOffsety) node [below left] {z};
	
	\draw[<->,black,>=stealth, shorten >=3pt, shorten <=3pt] (\W,\H+\extraH/3) -- (2*\W, \H+\extraH/3);	
	\draw[label] (1.5*\W, \H+\extraH/3) node [above] {$\frac{\dx}{r}$};
	
	\draw[<->,black,>=stealth, shorten >=3pt, shorten <=3pt] (2*\W+\extraW/3,0) -- (2*\W+\extraW/3, \H);	
	\draw[label] (2*\W+\extraW/3, \H/2) node [right] {$\frac{\dy}{r}$};	
	
	
	\end{tikzpicture}
	\caption{Subgridding scenario considered in Sec.~\ref{sec:subgridding} for the case of $r=2$. For clarity, a virtual gap has been inserted between the two grids. This virtual gap is closed when the two grids are connected.}
	\label{fig:connectionA}
	
	\begin{tikzpicture}
	[sgrid/.style={dotted},
	arrE/.style={thick,->,Ecol,>=stealth,shorten >=7pt, shorten <=7pt},
	labelH/.style={font=\fontsize{\sizefont}{\sizefont}\selectfont, Hcol},
	labelHV/.style={font=\fontsize{\sizefont}{\sizefont}\selectfont, HV},
	labelE/.style={font=\fontsize{\sizefont}{\sizefont}\selectfont, Ecol},
	label/.style={font=\fontsize{\sizefont}{\sizefont}\selectfont}
	]
	\pgfmathsetmacro{\sizefont} {9};
	\pgfmathsetmacro{\H}{1};
	\pgfmathsetmacro{\Hbox}{0.7};
	\pgfmathsetmacro{\W}{6};
	\pgfmathsetmacro{\dw}{0.1};

	\draw (0,\H+\Hbox+\H) -- (\W,\H+\Hbox+\H);
	\draw[label] (\W/2,\H+\Hbox+\H) node [above] {Fine mesh};
	
	\draw (0,0) -- (\W,0);
	\draw[label] (\W/2,0) [below] node {Coarse mesh};
	
	\draw (0,\H) rectangle (\W, \H+\Hbox);
	\draw (\W/2, \H + \Hbox/2) node {Interpolation rule};
	
	\draw[->,>=stealth, HV] (\W/2-\dw,0) -- (\W/2-\dw, \H);
	\draw[->,>=stealth, Ecol] (\W/2+\dw,\H) -- (\W/2+\dw, 0);
	
	\draw[->,>=stealth, HV] (\W/4-\dw,\H+\Hbox+\H) -- (\W/4-\dw, \H+\Hbox);
	\draw[->,>=stealth, Ecol] (\W/4+\dw,\H+\Hbox) -- (\W/4+\dw, \H+\Hbox+\H);
	
	\draw[->,>=stealth, HV] (3*\W/4-\dw,\H+\Hbox+\H) -- (3*\W/4-\dw, \H+\Hbox);
	\draw[->,>=stealth, Ecol] (3*\W/4+\dw,\H+\Hbox) -- (3*\W/4+\dw, \H+\Hbox+\H);
	
	\draw[labelHV] (\W/2-\dw, \H/2) node [left] {$H_z|_{i+\half, j}$};
	\draw[labelE] (\W/2+\dw,\H/2) node [right] {$E_x|_{i+\half, j}$};
	
	\draw[labelHV] (\W/4-\dw,\H+\Hbox+\H/2) node [left] {$\hat{H}_z|_{\ihat+\half, \jhat}$};
	\draw[labelE] (\W/4+\dw,\H+\Hbox+\H/2) node [right] {$\hat{E}_x|_{\ihat+\frac{3}{2}, \jhat}$};
	
	\draw[labelHV] (3*\W/4-\dw,\H+\Hbox+\H/2) node [left] {$\hat{H}_z|_{\ihat+\frac{3}{2}, \jhat}$};
	\draw[labelE] (3*\W/4+\dw,\H+\Hbox+\H/2) node [right] {$\hat{E}_x|_{\ihat+\half, \jhat}$};	

	\end{tikzpicture}
	\caption{Interpretation of the subgridding method as the connection of three dynamical systems, representing the coarse grid, the fine grid, and the interpolation rule.}
	\label{fig:connectionB}
\end{figure}

\subsection{State Equations for the Coarse and Fine Cells on the Boundary}

For the coarse and fine grids, we adopt the formulation of Sec.~\ref{sec:fdtdsystem}, introducing hanging variables on the two boundaries to be connected. The purpose of the hanging variables is to facilitate the coupling of the two meshes and the proof of stability. These extra variables will be eliminated when deriving the update equation for the fields at the interface. On the North boundary of the coarse cell, we introduce the hanging variable
\begin{equation}
\HS^{n+\half} = H_z|_{i+\half,j}^{n+\half}\,.
\end{equation} 
Similarly, on the South fine cell boundaries we introduce the hanging variables
\begin{equation}
\vect{\hat{H}}_{S}^{n+\half} = 
\begin{bmatrix}
\hat{H}_z|_{\ihat+\half,\jhat}^{n+\half} & \hdots & \hat{H}_z|_{\ihat+r-\half, \jhat}^{n+\half}
\end{bmatrix}^T\,,
\end{equation}
as shown in Fig.~\ref{fig:connectionA}. 

From (\ref{eq:updateExN}), we obtain the following state equation for the coarse E-field sample at the interface and the coarse hanging variable
\begin{multline}
\ESleft \ES^{n+1} = \\
\ESright \ES^{n} 
+ \HS^{n+\half}
- \Hz^{n+\half} 
\,,
\label{eq:updateExboundary}
\end{multline}
where
\begin{align}
\Hz^{n+\half} & = H_z|_{i+\half,j-\half}^{n+\half}\,,	& 	\ES^{n} & = E_x|_{i+\half,j}^{n}\,, 
\end{align}
and $\eps_x$ and $\sigma_x$ are the permittivity and conductivity on the primary edge of the coarse cell below the interface. Similarly, the state equation for the $r$ fine cells can be written as
\begin{multline}
\EShatleft \EShat^{n+1} = \\
\EShatright \EShat^{n} 
+ \Hzhat^{n+\half} - \vect{\hat{H}}_{S}^{n+\half}\,,
\label{eq:updateExhatboundary}
\end{multline}
where
\begin{equation}
\Hzhat^{n+\half} = 
\begin{bmatrix}
\hat{H}_z|_{\ihat+\half,\jhat+\half}^{n+\half} & \hdots & \hat{H}_z|_{\ihat + r - \half,\jhat+\half}^{n+\half}
\end{bmatrix}^T\,,
\end{equation} 
\begin{equation}
\vect{\hat{E}}_{S}^{n} = 
\begin{bmatrix}
\hat{E}_x|_{\ihat+\half,\jhat}^{n} & \hdots & \hat{E}_x|_{\ihat+r-\half, \jhat}^{n}
\end{bmatrix}^T\,,
\end{equation} 
and $\Dexhat$ and $\Dsxhat$ are $r \times r$ diagonal matrices containing the values of permittivity and conductivity above the interface for the South edges where the fine $\hat{E}_x$ fields are sampled.

\subsection{Interpolation Rule}
The interpolation rule~\cite{Reciprocity} relates the field sampled on the coarse and fine grids. From the boundary condition for tangential electric fields, we have that
\begin{equation}
\EShat^n = \ES^n\T \qquad \forall n\,,
\label{eq:equalE}
\end{equation}
where $\T$ is an $r~\times~1$ matrix of ones. Condition~\eqref{eq:equalE} forces the coarsely- and finely-sampled $E$ fields to be equal at all times.

On the magnetic fields at the boundary, we impose a constraint reciprocal to~\eqref{eq:equalE}
\begin{equation}
\HS^{n+\half} = \frac{\T^T \HShat^{n+\half}}{r} \qquad \forall n\,.
\label{eq:averH}
\end{equation} 
We will see in Sec.~\ref{sec:connectionStability} that the reciprocity between the $E$ and $H$ interpolation rules is required to ensure stability.

\subsection{Explicit Update Equation for the Interface}
Interpolation conditions~\eqref{eq:equalE} and~\eqref{eq:averH} can now be used to combine (\ref{eq:updateExboundary}) and (\ref{eq:updateExhatboundary}) in order to derive an explicit update equation for the fields at the coarse-fine interface, and eliminate hanging variables.

Substituting~\eqref{eq:equalE} into~\eqref{eq:updateExhatboundary}, and multiplying the obtained equation by $\T^T/r$ on the left yields
\begin{multline}
 \frac{\dy}{2r}  \left( \frac{\T^T \Dexhat \T}{r \dt} + \frac{\T^T \Dsxhat \T}{2r} \right)  \ES^{n+1}  = \\
 \frac{\dy}{2r}  \left( \frac{\T^T \Dexhat \T}{r \dt} - \frac{\T^T \Dsxhat \T}{2r} \right) \ES^{n} \\
+  \frac{\T^T \Hzhat^{n+\half}}{r}  - \frac{\T^T \vect{\hat{H}}_{S}^{n+\half}}{r} \,.
\label{eq:updateExinterm}
\end{multline}
For the simplicity of notation we define symbols for the average permittivity and conductivity of the $r$ South boundary fine cells
\begin{equation}
\hat{\eps}_x = \frac{\T^T \Dexhat \T}{r}\,,
\quad
\hat{\sigma}_x = \frac{\T^T \Dsxhat \T}{r}\,.
\end{equation}
Equation~\eqref{eq:updateExinterm} can now be added to (\ref{eq:updateExboundary}), yielding
\begin{multline}
\frac{\dy}{2}\left(\frac{\eps_x + \frac{\hat{\eps}_x}{r}}{\dt} + \frac{\sigma_x + \frac{\hat{\sigma}_x}{r}}{2}\right) \ES^{n+1} =\\
\frac{\dy}{2}\left(\frac{\eps_x + \frac{\hat{\eps}_x}{r}}{\dt} - \frac{\sigma_x + \frac{\hat{\sigma}_x}{r}}{2}\right)\ES^{n}\\
+ \HS^{n+\half} 
- \Hz^{n+\half} 
+ \frac{\T^T \Hzhat^{n+\half}}{r} 
- \frac{\T^T \HShat^{n+\half}}{r}
\,.
\label{eq:beforealmostUpdate}
\end{multline}
With the interpolation rule~\eqref{eq:averH}, we cancel the hanging variables to obtain
\begin{multline}
\frac{\dy}{2}\left(\frac{\eps_x + \frac{\hat{\eps}_x}{r}}{\dt} + \frac{\sigma_x + \frac{\hat{\sigma}_x}{r}}{2}\right) \ES^{n+1} = \\
\frac{\dy}{2}\left(\frac{\eps_x + \frac{\hat{\eps}_x}{r}}{\dt} - \frac{\sigma_x + \frac{\hat{\sigma}_x}{r}}{2}\right) \ES^{n}
-\Hz^{n+\half} 
+ \frac{\T^T \Hzhat^{n+\half}}{r} 
\,.
\label{eq:almostUpdate}
\end{multline}
Rearranging (\ref{eq:almostUpdate}), we get the following \emph{explicit} update equation for $\ES$ in terms of the neighboring magnetic fields 
\begin{multline}
\ES^{n+1} \!\! =  \!\!
			\left( 
				\frac{\eps_x + \frac{\hat{\eps}_x}{r}}{\dt} + \frac{\sigma_x + \frac{\hat{\sigma}_x}{r}}{2} 
			\right)^{\!\!-1}\!\!\!\!
			\left( 
				\frac{\eps_x + \frac{\hat{\eps}_x}{r}}{\dt} - \frac{\sigma_x + \frac{\hat{\sigma}_x}{r}}{2} 				\right) \!\!
			\ES^{n}
			 \\
			+ \frac{2}{\dy} \left( 
				\frac{\eps_x + \frac{\hat{\eps}_x}{r}}{\dt} + \frac{\sigma_x + \frac{\hat{\sigma}_x}{r}}{2} 
			\right)^{-1}
			\left( \frac{\T^T \Hzhat^{n+\half}}{r} - \Hz^{n+\half} \right)
\,.
\label{eq:connUpdate}
\end{multline}
The fine interface electric fields are then updated using~(\ref{eq:equalE}). It should be noted that, when $r=1$, equation~(\ref{eq:connUpdate}) reduces to the standard FDTD update equation. 

The overall subgridding algorithm can be summarized as follows:
\begin{enumerate}
	\item Calculate the magnetic field samples everywhere at time $n+\half$ using conventional FDTD update equations.
	\item Use standard FDTD update equations to compute the E fields at time $n+1$ on the edges that are strictly inside the coarse and fine grids.
	\item Compute $\ES^{n+1}$, the coarsely-sampled electric field at the interface, using~\eqref{eq:connUpdate}. 
	\item Update the finely-sampled $\hat{E}_x$ fields at the interface using~\eqref{eq:equalE}. 
\end{enumerate}
The computational overhead of this scheme is minimal, since the coefficients in~\eqref{eq:connUpdate} can be pre-computed before the update iterations. 

The proposed method is thus simple to implement, since it consists of conventional FDTD update equations inside the two meshes and a modified update equation for the edges at the interface. In comparison to previous subgridding methods, we avoid non-rectangular cells~\cite{xiao2007three}, finite element concepts~\cite{collino2006conservative} and Withney forms~\cite{chilton2007conservative,venkatarayalu2007stable}. The proposed update equation can be also used at corners with no modifications, unlike in previous works that require special treatment~\cite{thoma1996consistent} or L-shaped cells~\cite{xiao2007three}. Finally, we remark that in Sec.~\ref{sec:fdtdsystem}, the FDTD update equations have been given in matrix form in order to reveal the dissipative nature of FDTD systems. This form, however, does not have to be used in the practical implementation, which can use conventional for loops or, in languages like MATLAB, vectorized operations.

\subsection{Proof of Stability}
\label{sec:connectionStability}
The proposed dissipation theory makes it straightforward to prove that the subgridding algorithm is stable under the CFL limit of the fine grid. For stability, all three subsystems in Fig.~\ref{fig:connectionB} need to be dissipative, which requires one to use the more restrictive fine grid time step. With the time step chosen correctly, in order to guarantee the overall stability we need to only ensure dissipativity of the interpolation rule. 

Analogously to (\ref{eq:supply1}) the supply rate for the interpolation subsystem is defined as
\begin{multline}
s(\vect{y}^n, \vect{u}^{n+\half}) = 
- \dt \dx \frac{\ES^n + \ES^{n+1}}{2} \HS^{n+\half} \\
+ \dt \frac{\dx}{r} \frac{(\EShat^n + \EShat^{n+1})^T}{2} \HShat^{n+\half} \,.
\label{eq:supplyconnection}
\end{multline}
Substituting (\ref{eq:equalE}) and (\ref{eq:averH}) into (\ref{eq:supplyconnection}), we have
\begin{multline}
s(\vect{y}^n, \vect{u}^{n+\half}) = \\
\dt \dx \frac{(\ES^n + \ES^{n+1})^T}{2} 
\left(\frac{\T^T \HShat^{n+\half}}{r} -\HS^{n+\half}\right) = 0
\,.
\label{eq:supplyconnection2}
\end{multline}
Therefore, the proposed interpolation rule is a lossless system that does not dissipate nor absorb any energy. Physically, this result makes sense, since the connection system corresponds to an infinitely thin region where no energy dissipation can take place. In conclusion, since the proposed subgridding method can be seen as the connection of three dissipative systems, it is overall dissipative, and thus stable.

\section{Numerical Examples}
\label{sec:numerical_examples}
The following sections provide the results of FDTD simulations that were done to verify the proposed theory. The subgridding algorithm was implemented in Matlab and tests were performed in order to check its stability, ability to handle material traverse, its accuracy and speedup capability.

\subsection{Stability Verification}
\label{sec:ne_stability}
Stability was verified by simulating an empty cavity with perfect electric conductor (PEC) walls with a centrally placed subgridding region for 10$^6$ time steps. The layout of the simulation is shown in Fig.~\ref{fig:layout_stability}. The cavity was excited using a modulated Gaussian magnetic current source with central frequency of 3.75~GHz and half-width at half-maximum of 0.74~GHz. Magnetic field was recorded at a probe placed inside the cavity. The time step was set 1\% below the CFL limit of the fine grid.

The resulting waveform in Fig.~\ref{fig:result_stability} shows that no instability occurred after 10$^6$ time steps. Stable behavior after such a large number of time steps verifies the correctness of the proposed stability enforcement technique, especially since no lossy materials were present to dissipate any spurious energy artificially created by the algorithm.
\begin{figure}[t]
	\centering
	\begin{tikzpicture}
	[arr/.style={->,>=stealth,shorten >=1pt},
	label/.style={font=\fontsize{\sizefont}{\sizefont}\selectfont}
	]
	
\pgfmathtruncatemacro{\GR}{4};
\pgfmathtruncatemacro{\dxCmm}{1};
\pgfmathtruncatemacro{\dyCmm}{2};
\pgfmathsetmacro{\lenx}{4.8};
\pgfmathsetmacro{\leny}{3.2};
\pgfmathsetmacro{\subgrAx}{0.8};
\pgfmathsetmacro{\subgrAy}{0.8};
\pgfmathsetmacro{\subgrBx}{4};
\pgfmathsetmacro{\subgrBy}{2.4};
\pgfmathsetmacro{\curx}{0.6};
\pgfmathsetmacro{\cury}{0.4};
\pgfmathsetmacro{\probex}{4.2};
\pgfmathsetmacro{\probey}{2.8};
\pgfmathtruncatemacro{\truelenxmm}{60};
\pgfmathtruncatemacro{\truelenymm}{40};
\pgfmathtruncatemacro{\subgrlenxmm}{40};
\pgfmathtruncatemacro{\subgrlenymm}{20};
	
	\pgfmathsetmacro{\Rbig}{0.035}

	\pgfmathsetmacro{\sizefont} {9}
	\draw [very thick] (0, 0) rectangle (\lenx,\leny);
	\draw [->, >=stealth, shorten >=2pt, shorten <=7pt] (\lenx/2-0.5, \leny+0.4) -- (\lenx/2, \leny);
	\draw (\lenx/2-0.5, \leny+0.4) node {PEC};
	\draw [red, fill = red] (\curx,\cury) circle (\Rbig);
	\draw[font=\fontsize{\sizefont}{\sizefont}\selectfont] (\curx,\cury) node [below] {Source};	
	\draw [blue, fill = blue] (\probex,\probey) circle (\Rbig);
	\draw[label] (\probex,\probey) node [left] {Probe};	
	
	\draw [dashed, black, thick] (\subgrAx, \subgrAy) rectangle(\subgrBx, \subgrBy);

	\draw[label] (\lenx/2,\leny/2) node {\begin{tabular}{c} $\dxhat$ = $\dx/$\GR \\ $\dyhat$ = $\dy/$\GR \end{tabular}};
	\draw[label] (0,\leny) node [below right] {\begin{tabular}{c} $\dx$ = \dxCmm\ mm \\ $\dy$ = \dyCmm\ mm \end{tabular}};
	
	\pgfmathsetmacro{\axisoffs}{\lenx/6}
	\draw[->,black,>=stealth] (\lenx + \axisoffs,\leny/2) -- (\lenx + \lenx/10+\axisoffs,\leny/2);
	\draw[->,black,>=stealth] (\lenx + \axisoffs,\leny/2) -- (\lenx + \axisoffs,\leny/2 + \lenx/10);
	\draw [fill=white] (\lenx + \axisoffs,\leny/2) circle (0.02*\lenx);
	\draw [fill=black] (\lenx + \axisoffs,\leny/2) circle (0.0075*\lenx);
	\draw[label] (\lenx + \axisoffs+\lenx/10,\leny/2) node [right] {x};
	\draw[label] (\lenx + \axisoffs,\leny/2 + \lenx/10) node [above] {y};
	\draw[label] (\lenx + \axisoffs,\leny/2) node [below left] {z};
	
	\draw[<->,black,>=stealth] (0,-0.15) -- (\lenx,-0.15);
	\draw (\lenx/2,-0.15) node [below] {\truelenxmm\ mm};
	\draw[<->,black,>=stealth] (-0.15,0) -- (-0.15,\leny);
	\draw (-0.15,\leny/2) node [above, rotate = 90] {\truelenymm\ mm};

	\draw[<->,black,>=stealth] (\subgrAx,\subgrAy-0.1) -- (\subgrBx,\subgrAy-0.1);
	\draw (\lenx/2,\subgrAy-0.1) node [below] {\subgrlenxmm\ mm};
	
	\draw[<->,black,>=stealth] (\subgrAx-0.1,\subgrAy) -- (\subgrAx-0.1,\subgrBy);
	\draw (\subgrAx-0.1,\leny/2) node [above, rotate=90] {\subgrlenymm\ mm};
	
	\end{tikzpicture}
	\caption{Layout of the PEC cavity considered in Sec.~\ref{sec:ne_stability}.}
	\label{fig:layout_stability}
\end{figure}
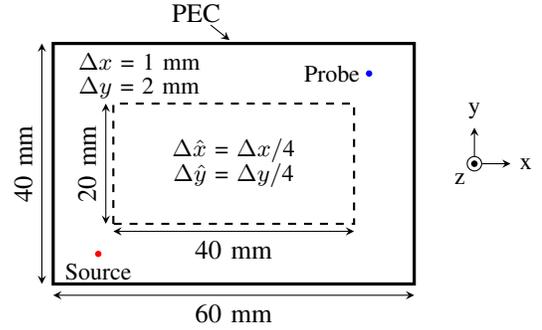

\begin{figure}[t]
	\centering
	\includegraphics[width=0.95\columnwidth]{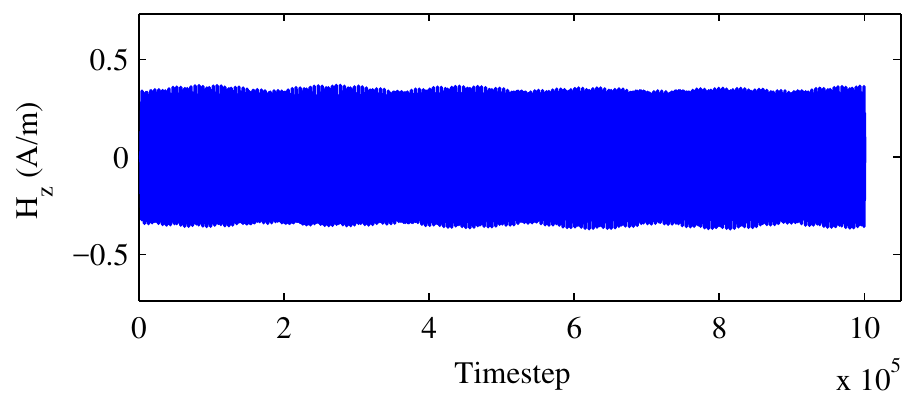}
	\caption{Magnetic field at the probe for the the empty cavity with subgridding of Sec.~\ref{sec:ne_stability}, computed for 10$^6$ time steps.}
	\label{fig:result_stability}
\end{figure}

\subsection{Material Traverse}
\label{sec:ne_mt}
The ability of the proposed method to produce meaningful results when objects traverse the subgridding interface was tested using the setup in Fig.~\ref{fig:layout_mt}. A 16~$\times$~16~mm slab of material was simulated for three different placements of the subgridding region: enclosing, traversing and away from the slab. The test was done for copper and a lossy dielectric with conductivity of 5~S/m and relative permittivity of 2. As a reference, uniformly discretized all-coarse and all-fine simulations were performed at the fine time step, in addition to the subgridding simulations. Coarse and fine meshes in uniformly discretized and subgridding runs were chosen as 1~mm and 0.2~mm respectively. 15~mm-thick perfectly matched layer (PML) terminated the simulation region. Modulated Gaussian magnetic current excitation was used at 15.0~GHz central frequency with 8.82~GHz half-width at half-maximum bandwidth. The time step was chosen as 0.467~ps in all test cases.

The magnetic field waveforms at the probe recorded in the different subgridding scenarios are shown in Fig.~\ref{fig:result_mt}, and are in excellent agreement among each other. This result confirms that the proposed subgridding method can properly handle material traverse, for both very good conductors and for lossy dielectrics.

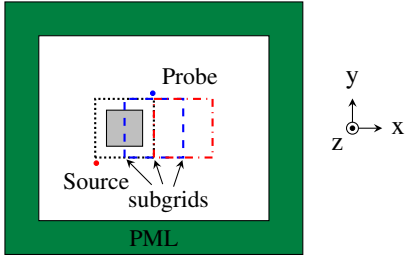
\begin{figure}[t]
	\centering
		\begin{tikzpicture}
		[arr/.style={->,>=stealth,shorten >=1pt}]
		
\pgfmathtruncatemacro{\PML}{15}
\pgfmathsetmacro{\lenx}{3.96}
\pgfmathsetmacro{\leny}{3.36}
\pgfmathsetmacro{\curx}{1.215}
\pgfmathsetmacro{\cury}{1.215}
\pgfmathsetmacro{\probex}{1.965}
\pgfmathsetmacro{\probey}{2.145}
\pgfmathsetmacro{\PMLax}{0.45}
\pgfmathsetmacro{\PMLay}{0.45}
\pgfmathsetmacro{\PMLbx}{3.51}
\pgfmathsetmacro{\PMLby}{2.91}
\pgfmathsetmacro{\encax}{1.2}
\pgfmathsetmacro{\encay}{1.29}
\pgfmathsetmacro{\encbx}{1.98}
\pgfmathsetmacro{\encby}{2.07}
\pgfmathsetmacro{\travax}{1.59}
\pgfmathsetmacro{\travay}{1.29}
\pgfmathsetmacro{\travbx}{2.37}
\pgfmathsetmacro{\travby}{2.07}
\pgfmathsetmacro{\outsax}{1.98}
\pgfmathsetmacro{\outsay}{1.29}
\pgfmathsetmacro{\outsbx}{2.76}
\pgfmathsetmacro{\outsby}{2.07}
\pgfmathsetmacro{\subgrlabx}{2.175}
\pgfmathsetmacro{\subgrlaby}{0.702}
\pgfmathsetmacro{\Rbig}{0.03}
\pgfmathsetmacro{\objax}{1.35}
\pgfmathsetmacro{\objay}{1.44}
\pgfmathsetmacro{\objbx}{1.83}
\pgfmathsetmacro{\objby}{1.92}

			\pgfmathsetmacro{\sizefont} {9}
			\draw [fill = {rgb:black,25;green,50;blue,25}] (0, 0) rectangle (\lenx,\leny);
			\draw [fill = white] (\PMLax, \PMLay) rectangle(\PMLbx, \PMLby);
			\draw[font=\fontsize{\sizefont}{\sizefont}\selectfont] (\lenx/2,\PMLay/2) node {PML};
			\draw [red, fill = red] (\curx,\cury) circle (\Rbig);
			\draw[font=\fontsize{\sizefont}{\sizefont}\selectfont] (\curx,\cury) node [below] {Source};	
			\draw [blue, fill = blue] (\probex,\probey) circle (\Rbig);
			\draw[font=\fontsize{\sizefont}{\sizefont}\selectfont] (\probex,\probey) node [above right] {Probe};	
			
			\draw [fill = gray!50] (\objax, \objay) rectangle (\objbx, \objby);
			\draw [densely dotted, black, thick] (\encax, \encay) rectangle(\encbx, \encby);
			\draw [dashed, blue, thick] (\travax, \travay) rectangle(\travbx, \travby);
			\draw [dashdotted, red, thick] (\outsax, \outsay) rectangle(\outsbx, \outsby);

			
			\draw[arr, shorten <=9pt] (\encbx/2+\travbx/2,\subgrlaby) -- (\encax/2+\encbx/2,\outsay);
			\draw[arr, shorten <=7pt] (\encbx/2+\travbx/2,\subgrlaby) -- (\travax/2+\travbx/2,\outsay);
			\draw[arr, shorten <=6pt] (\encbx/2+\travbx/2,\subgrlaby) -- (\outsax/2+\outsbx/2,\outsay);
			\draw[font=\fontsize{\sizefont}{\sizefont}\selectfont] (\encbx/2+\travbx/2,\subgrlaby) node {\contour{white}{subgrids}};

			\pgfmathsetmacro{\axisoffs}{\lenx/6}
			\draw[->,black,>=stealth] (\lenx + \axisoffs,\leny/2) -- (\lenx + \lenx/10+\axisoffs,\leny/2);
			\draw[->,black,>=stealth] (\lenx + \axisoffs,\leny/2) -- (\lenx + \axisoffs,\leny/2 + \lenx/10);
			\draw [fill=white] (\lenx + \axisoffs,\leny/2) circle (0.02*\lenx);
			\draw [fill=black] (\lenx + \axisoffs,\leny/2) circle (0.0075*\lenx);
			\draw[label] (\lenx + \axisoffs+\lenx/10,\leny/2) node [right] {x};
			\draw[label] (\lenx + \axisoffs,\leny/2 + \lenx/10) node [above] {y};
			\draw[label] (\lenx + \axisoffs,\leny/2) node [below left] {z};
		
		\end{tikzpicture}
	\caption{Layout used for the material traverse test of Sec.~\ref{sec:ne_mt}, and the three different placements of the subgridding region.}
	\label{fig:layout_mt}
\end{figure}

\begin{figure}[t]
	\centering
	\includegraphics[width=0.95\columnwidth]{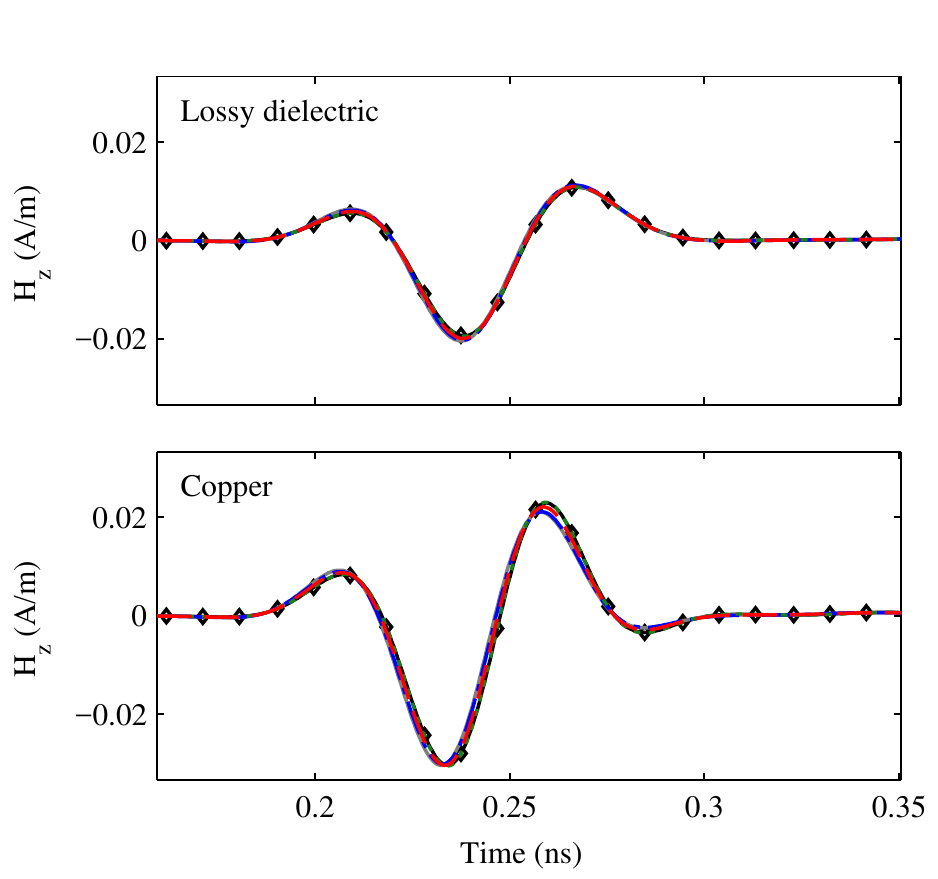}
	\caption{Time-domain magnetic field at the probe recorded for the three locations of the subgrid in Sec.~\ref{sec:ne_mt}: enclosed~(\color{legendBlue} \st{$\;$}~\st{$\;\;\;\;$}~\st{$\;$}\color{black}), outside~(\color{legendGreen} \st{$\;$}~\st{$\;$}~\st{$\;$} \st{$\;$}\color{black}) and traversing the object~(\color{legendRed} \st{$\;\;\;$}~\st{$\;\;\;$}\color{black}). Waveforms from the uniformly discretized all-coarse~(\st{$\;\;\;\diamond\;\;\;$}) and all-fine~(\color{legendGrey} \st{$\;\;\;\;\;\;\;$}\color{black}) simulations are also shown.}
	\label{fig:result_mt}
\end{figure}

\subsection{Application to Simulating Scatterer Reflections}
\label{sec:ne_rods4}

In order to investigate the accuracy of the proposed scheme, we looked at waveguide reflections from the scatterer shown in Fig.~\ref{fig:rods4}, which consisted of four copper rods with 1~mm radius. We have also investigated the reflections from the subgridding interface, in order to assess the quality of the subgridding scheme. The coarse cell was chosen to be 1~mm, which was $\frac{1}{10}$ of the minimum wavelength of interest that corresponded to 30.0~GHz. The fields of the incident wave were computed by running a reference simulation without the scatterer and without subgridding. Reflected wave fields were found by subtracting the incident wave fields from the total fields in simulations with the scatterer. 15~mm-thick PML boundary was chosen to terminate the two sides of the waveguide. The time steps in the subgridding runs were set 1\% below the CFL limit of the refined region. Uniformly discretized simulations were run 1\% below the CFL limits of the respective grids. 

The resulting reflections are shown in Fig.~\ref{fig:result_4rods} for the subgridding case with different refinement ratios, as well as the reference run with full refinement by a factor of 6. The simulation times are shown in Table~\ref{table:result_rods4}. It can be seen that the local refinement of the grid around the scatterer with the proposed subgridding method can improve the accuracy substantially compared to the coarse grid run. The larger choice of refinement ratio makes the solution very close to the reference all-fine solution. Moreover, very good speedup - almost by a factor of 11 - is achieved even for the grid refinement of 6 when the redundant high resolution of the grid in air is eliminated. The reflections from subgridding interface were significantly lower than those from the scatterer, further demonstrating the accuracy of the proposed method.
 
\begin{figure}[t]
	\centering
	\begin{tikzpicture}
	[label/.style={font=\fontsize{\sizefont}{\sizefont}\selectfont}]

\pgfmathsetmacro{\lenx}{5.28}
\pgfmathsetmacro{\leny}{3.2}
\pgfmathsetmacro{\PMLax}{1.2}
\pgfmathsetmacro{\PMLbx}{4.08}
\pgfmathsetmacro{\subgrax}{3.12}
\pgfmathsetmacro{\subgray}{1.28}
\pgfmathsetmacro{\subgrbx}{3.76}
\pgfmathsetmacro{\subgrby}{1.92}
\pgfmathsetmacro{\rodCoordAx}{3.28}
\pgfmathsetmacro{\rodCoordAy}{1.76}
\pgfmathsetmacro{\rodCoordBx}{3.28}
\pgfmathsetmacro{\rodCoordBy}{1.44}
\pgfmathsetmacro{\rodCoordCx}{3.6}
\pgfmathsetmacro{\rodCoordCy}{1.76}
\pgfmathsetmacro{\rodCoordDx}{3.6}
\pgfmathsetmacro{\rodCoordDy}{1.44}
\pgfmathsetmacro{\curx}{1.36}
\pgfmathsetmacro{\interfx}{1.52}
\pgfmathsetmacro{\rodR}{0.08}
\pgfmathtruncatemacro{\truelenxmm}{66}
\pgfmathtruncatemacro{\truelenymm}{40}
\pgfmathtruncatemacro{\dxCmm}{1}
\pgfmathtruncatemacro{\truedistInterfToSubgrmm}{20}
\pgfmathtruncatemacro{\truesubgrxmm}{8}
\pgfmathtruncatemacro{\truesubgrymm}{8}
\pgfmathtruncatemacro{\trueRodToRodNearestmm}{2}
\pgfmathtruncatemacro{\trueRodDiammm}{2}

	\pgfmathsetmacro{\sizefont} {9};
	
	\draw [fill = {rgb:black,25;green,50;blue,25}] (0, 0) rectangle (\lenx,\leny);
	\draw [fill = white] (\PMLax, 0) rectangle(\PMLbx, \leny);
	
	\draw[font=\fontsize{\sizefont}{\sizefont}\selectfont] (\PMLax/2,\leny/2) node {PML};
	\draw[font=\fontsize{\sizefont}{\sizefont}\selectfont] (\lenx - \PMLax/2,\leny/2) node {PML};
	
	\draw[dashed] (\subgrax, \subgray) rectangle (\subgrbx, \subgrby);
	\draw[label] (\subgrax/2 + \subgrbx/2, \subgrby) node [above] {\truesubgrxmm$\times$\truesubgrymm\ mm};
	
	\draw [fill = gray!50] (\rodCoordAx, \rodCoordAy) circle (\rodR);
	\draw [fill = gray!50] (\rodCoordBx, \rodCoordBy) circle (\rodR);
	\draw [fill = gray!50] (\rodCoordCx, \rodCoordCy) circle (\rodR);
	\draw [fill = gray!50] (\rodCoordDx, \rodCoordDy) circle (\rodR);
	
	\draw [thick, red] (\curx,0) -- (\curx, \leny);
	\draw [thick, gray!150] (\interfx,0) -- (\interfx, \leny);
	
	\draw [->,red,>=stealth, shorten >= 1pt] (\curx + 0.3, 0.85*\leny+0.2) -- (\curx, 0.85*\leny);
	\draw [label] (\curx + 0.3 -0.07, 0.85*\leny+0.2) node [right] {$J_y$ current};
	
	\draw [->,gray!150,>=stealth, shorten >= 1pt] (\interfx + 0.3, 0.73*\leny+0.2) -- (\interfx, 0.73*\leny);
	\draw [label] (\interfx + 0.3 -0.07, 0.73*\leny+0.2) node [right] {Probes};
	
	
	\draw [very thick] (0,0) -- (\lenx, 0);
	\draw [very thick] (0,\leny) -- (\lenx, \leny);
	\draw[label] (\lenx/2 - 0.6, \leny + 0.4) node {PEC};
	\draw[->,black,>=stealth, shorten >= 3pt, shorten <= 8pt] (\lenx/2 - 0.6, \leny + 0.4) -- (\lenx/2, \leny);
	
			\pgfmathsetmacro{\axisoffs}{\lenx/6}
			\draw[->,black,>=stealth] (\lenx + \axisoffs,\leny*0.7) -- (\lenx + \lenx/10+\axisoffs,\leny*0.7);
			\draw[->,black,>=stealth] (\lenx + \axisoffs,\leny*0.7) -- (\lenx + \axisoffs,\leny*0.7 + \lenx/10);
			\draw [fill=white] (\lenx + \axisoffs,\leny*0.7) circle (0.02*\lenx);
			\draw [fill=black] (\lenx + \axisoffs,\leny*0.7) circle (0.0075*\lenx);
			\draw[label] (\lenx + \axisoffs+\lenx/10,\leny*0.7) node [right] {x};
			\draw[label] (\lenx + \axisoffs,\leny*0.7 + \lenx/10) node [above] {y};
			\draw[label] (\lenx + \axisoffs,\leny*0.7) node [below left] {z};
	
	\pgfmathsetmacro{\d}{\leny/30};
	\draw[<->,black,>=stealth] (0,-\d) -- (\lenx, -\d);	
	\draw[label] (\lenx/2, -\d) node [below] {\truelenxmm\ mm};

	\draw[<->,black,>=stealth] (-\d,0) -- (-\d, \leny);	
	\draw[label] (-\d, \leny/2) node [rotate = 90, above] {\truelenymm\ mm};
	
	\draw[label] (\interfx,0) node [above right] {\begin{tabular}{l} $\dx$ = \dxCmm\ mm \\ $\dy$ = \dxCmm\ mm \end{tabular}};
	
	\draw[<->,black,>=stealth, shorten >= 0pt, shorten <= 0pt] (\interfx,\leny/2) -- (\subgrax, \leny/2);	
	\draw[label] (\interfx/2 + \subgrax/2, \leny/2) node [below] {\truedistInterfToSubgrmm\ mm};
	
	\pgfmathsetmacro{\offsx}{\lenx+\lenx/15};
	\pgfmathsetmacro{\offsy}{0};
	
	\pgfmathsetmacro{\subexpand}{2};
	\pgfmathsetmacro{\subgrlenx}{(\subgrbx - \subgrax)};
	\pgfmathsetmacro{\subgrleny}{(\subgrby - \subgray)};
	\pgfmathsetmacro{\subgrrodAx}{(\subexpand*\rodCoordAx - \subexpand*\subgrax + \offsx)};
	\pgfmathsetmacro{\subgrrodAy}{(\subexpand*\rodCoordAy - \subexpand*\subgray + \offsy)};
	\pgfmathsetmacro{\subgrrodBx}{(\subexpand*\rodCoordBx - \subexpand*\subgrax + \offsx)};
	\pgfmathsetmacro{\subgrrodBy}{(\subexpand*\rodCoordBy - \subexpand*\subgray + \offsy)};
	\pgfmathsetmacro{\subgrrodCx}{(\subexpand*\rodCoordCx - \subexpand*\subgrax + \offsx)};
	\pgfmathsetmacro{\subgrrodCy}{(\subexpand*\rodCoordCy - \subexpand*\subgray + \offsy)};
	\pgfmathsetmacro{\subgrrodDx}{(\subexpand*\rodCoordDx - \subexpand*\subgrax + \offsx)};
	\pgfmathsetmacro{\subgrrodDy}{(\subexpand*\rodCoordDy - \subexpand*\subgray + \offsy)};
	\draw[label] (\offsx + \subexpand*\subgrlenx/2, \offsy) node [below] {Scatterer};
	
\draw [fill = gray!50] (\subgrrodAx, \subgrrodAy) circle (\subexpand*\rodR);
\draw [fill = gray!50] (\subgrrodBx, \subgrrodBy) circle (\subexpand*\rodR);
\draw [fill = gray!50] (\subgrrodCx, \subgrrodCy) circle (\subexpand*\rodR);
\draw [fill = gray!50] (\subgrrodDx, \subgrrodDy) circle (\subexpand*\rodR);	
\draw [<->, >=stealth, shorten >= 2pt, shorten <= 2pt] (\subgrrodCx, \subgrrodCy-\rodR) -- (\subgrrodDx, \subgrrodDy+\rodR);
\draw [label] (\subgrrodCx, \subgrrodCy/2+\subgrrodDy/2) node [right] {\trueRodToRodNearestmm\ mm};
	\pgfmathsetmacro{\cos}{0.7071};
	\pgfmathsetmacro{\sin}{0.7071};
	 \draw [<->, >=stealth] (\subgrrodAx- \cos*\rodR*\subexpand, \subgrrodAy - \sin*\rodR*\subexpand) -- (\subgrrodAx+ \cos*\rodR*\subexpand, \subgrrodAy + \sin*\rodR*\subexpand);
	 
	 \draw [label] (\subgrrodAx+\cos*\rodR*\subexpand, \subgrrodAy + \sin*\rodR*\subexpand) node [above] {\trueRodDiammm\ mm}; 
	
	\end{tikzpicture}
	\caption{Layout of the four-rod reflection simulation discussed in Sec.~\ref{sec:ne_rods4}. The dashed line shows the location of the subgrid in the subgridding run.}
	\label{fig:rods4}
\end{figure}
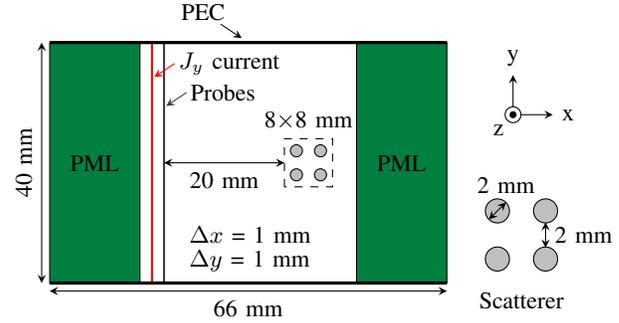

\begin{figure}[t]
	\centering
	\includegraphics[width=0.95\columnwidth]{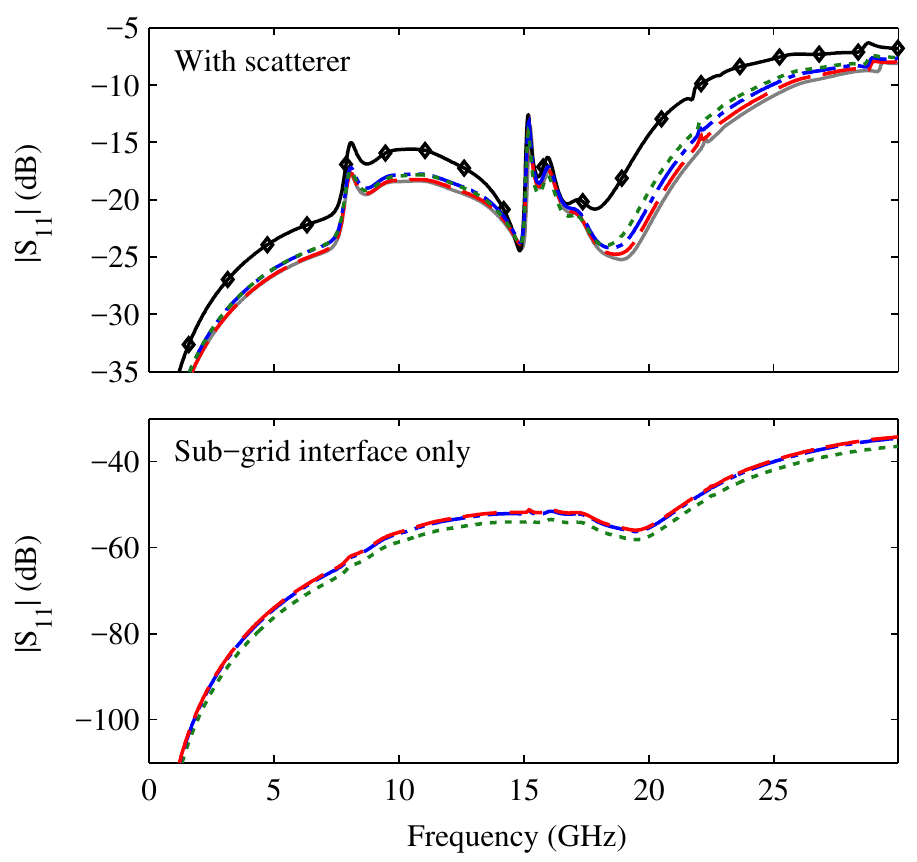}
	\caption{Reflected power with respect to the incident for the example of Sec.~\ref{sec:ne_rods4}. Top panel: reflections from the four-rod scatterer for different global discretization: all-coarse~(\st{$\;\;\;\diamond\;\;\;$}) and all-fine with $r$~=~6~(\color{legendGrey} \st{$\;\;\;\;\;\;\;$}\color{black}); and for the subgridding runs: $r=$~2~(\color{legendGreen} \st{$\;$}~\st{$\;$}~\st{$\;$}~\st{$\;$}\color{black}), $r$~=~4~(\color{legendBlue} \st{$\;$}~\st{$\;\;\;\;$}~\st{$\;$}\color{black}) and $r$~=~6~(\color{legendRed}\st{$\;\;\;$}~\st{$\;\;\;$}\color{black}). Bottom panel: reflections from the subgrid interface only.}
	\label{fig:result_4rods}
\end{figure}

\begin{table}[t]
	\centering
	\caption{Simulation times for different meshes in four-rod scatterer simulations in Sec.~\ref{sec:ne_rods4}. Simulation times were recorded for the total field run only - not for the reference run.}
	\begin{tabular}{c c c c} 
		\hline
		Method		&	Simulation time (s) \\ 
		\hline\hline
		All-fine ($r$ = 6)	&	427.4	 \\ 
		\hline
		All-coarse	&	3.2 \\ 
		\hline
		Subgridding ($r$ = 2)	&	10.7	 \\ 
		\hline
		Subgridding	($r$ = 4)&	23.3	 \\ 
		\hline
		Subgridding	($r$ = 6)&	39.5	 \\ 
		\hline
	\end{tabular}
	\label{table:result_rods4}
\end{table}
\subsection{Application to Exposure Studies}
\label{sec:ne_hm}
We show the possibility of applying the method for multiscale human exposure simulations. The chosen setup is shown in Fig.~\ref{fig:layout_hm}. A transverse cross-section of the head of the model of Ella from IT'IS Virtual Population V1.x~\cite{VirtualPopulation} was used to assign the material properties~\cite{material_properties} to the FDTD cells. A 900~MHz source was placed approximately 3 meters away from the human head. The simulation region was terminated with 20~cm PML. The reference (fine) resolution in the tissues was set to 2~mm, based on the mesh size chosen by~\cite{FinalReport2009} in a radiation exposure study at that frequency. Specific absorption rate, or SAR, was evaluated according to the formula in~\cite{SARref}, which was used as follows for sinusoidal excitation
\begin{equation}
\centering
SAR = \frac{\sigma (E_{x_{p}}^2 + E_{y_{p}}^2)}{2\rho}\,,
\end{equation}
where subscript "$p$" denotes the peak absolute value of a field component and $\rho$ corresponds to tissue density. SAR was calculated for each of the primary cells. The values of the electric field components at the primary cell centers were found by averaging the nearest known samples at the cell edges. The peak values of the fields were found for the time interval from 25.6~ns to 26.8~ns, which gave the wave sufficient time to reach the head and penetrate inside it.

In the subgridding run, the empty space was coarsened to 1~cm, which corresponded to $\frac{1}{33.3}$ of the wavelength. The reference all-fine run was performed at 2~mm resolution, along with the all-coarse run where the entire simulation region was discretized at 1~cm. The volumetric integral of SAR over the tissues was used as an accuracy metric
\begin{equation}
\sum_{i} \sum_{j} {SAR|_{i+\half, j+\half}} \dx \dy \,,
\end{equation}
where $\dx$ and $\dy$ are FDTD cell dimensions in the tissues. Time step of 4.67~ps was chosen for the all-fine simulation and for the subgridding simulation. The coarse grid case was run at 23.11~ps.

\begin{figure}[t]
	\centering
	\begin{tikzpicture}
		[label/.style={font=\fontsize{\sizefont}{\sizefont}\selectfont}]
	
\pgfmathsetmacro{\Scale}{0.0225}
\pgfmathsetmacro{\lenx}{3.609}
\pgfmathsetmacro{\leny}{2.745}
\pgfmathsetmacro{\curx}{0.3645}
\pgfmathsetmacro{\cury}{1.3455}
\pgfmathsetmacro{\PMLax}{0.18}
\pgfmathsetmacro{\PMLay}{0.18}
\pgfmathsetmacro{\PMLbx}{3.429}
\pgfmathsetmacro{\PMLby}{2.565}
\pgfmathsetmacro{\subgrax}{3.042}
\pgfmathsetmacro{\subgray}{1.26}
\pgfmathsetmacro{\subgrbx}{3.249}
\pgfmathsetmacro{\subgrby}{1.485}
\pgfmathsetmacro{\threeMeters}{2.7}
\pgfmathsetmacro{\truelenx}{4.01}
\pgfmathsetmacro{\trueleny}{3.05}
	
				\pgfmathsetmacro{\Rbig}{0.035};
				\pgfmathsetmacro{\sizefont} {9}
				
				\draw [fill = {rgb:black,25;green,50;blue,25}] (0, 0) rectangle (\lenx,\leny);
				\draw [fill = white] (\PMLax, \PMLay) rectangle(\PMLbx, \PMLby);
				\draw[label] (\lenx/2-3*\PMLay,2.5*\PMLay) node  {PML};
				\draw[->,>=stealth, shorten >= 1pt, shorten <= 9pt] (\lenx/2-3*\PMLay,2.5*\PMLay) -- (\lenx/2,\PMLay/2);
					
				\draw [red, fill = red] (\curx,\cury) circle (\Rbig);
				\draw[label] (\curx,\cury) node [right] {Source};	
				
				\node[inner sep=0pt] (whitehead) at (\subgrax/2+\subgrbx/2,\subgray/2+\subgrby/2)
				{\includegraphics[scale=\Scale]{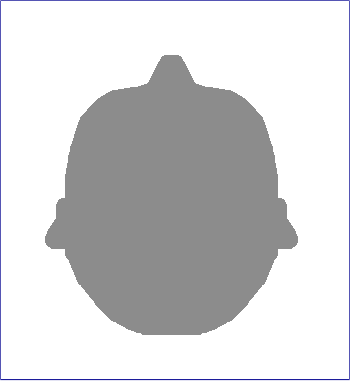}};
				\draw [dashed] (\subgrax, \subgray) rectangle (\subgrbx,\subgrby);
				\draw [label] (\PMLbx, \subgrby) node [above left] {subgrid};
				\draw [label] (\subgrax, \subgray/2+\subgrby/2) node [below left] {\begin{tabular}{c}Head\\slice\end{tabular}};
				\draw [->,>=stealth, shorten >= 1pt] (\subgrax - \lenx/15, \subgray/2+\subgrby/2- \lenx/15) -- (\subgrax/2+\subgrbx/2, \subgray/2+\subgrby/2);
		
	
				\pgfmathsetmacro{\axisoffs}{\lenx/6}
				\draw[->,black,>=stealth] (\lenx + \axisoffs,\leny/2) -- (\lenx + \lenx/10+\axisoffs,\leny/2);
				\draw[->,black,>=stealth] (\lenx + \axisoffs,\leny/2) -- (\lenx + \axisoffs,\leny/2 + \lenx/10);
				\draw [fill=white] (\lenx + \axisoffs,\leny/2) circle (0.02*\lenx);
				\draw [fill=black] (\lenx + \axisoffs,\leny/2) circle (0.0075*\lenx);
				\draw[label] (\lenx + \axisoffs+\lenx/10,\leny/2) node [right] {x};
				\draw[label] (\lenx + \axisoffs,\leny/2 + \lenx/10) node [above] {y};
				\draw[label] (\lenx + \axisoffs,\leny/2) node [below left] {z};
	
	\pgfmathsetmacro{\d}{\leny/30};
	\draw[<->,black,>=stealth] (0,-\d) -- (\lenx, -\d);	
	\draw[label] (\lenx/2, -\d) node [below] {\truelenx m};
	
	\draw[<->,black,>=stealth] (-\d,0) -- (-\d, \leny);	
	\draw[label] (-\d, \leny/2) node [rotate = 90, above] {\trueleny m};
	
	\end{tikzpicture}
	\caption{Layout of the simulation in Sec.~\ref{sec:ne_hm} with human head cross-section placed approximately 3 meters away from a point source.}
	\label{fig:layout_hm}
	
\end{figure}

\begin{table}[t]
	\centering
	\caption{Error in the integral of SAR and simulation times in the simulations discussed in Sec.~\ref{sec:ne_hm}.}
	\begin{tabular}{c c c c} 
		\hline
		Method		&	Error in SAR integral		&	Simulation time (s) \\ 
		\hline\hline
		All-fine	&			Not applicable		&	1596.4	 \\ 
		\hline
		All-coarse	&				59.5\%		&	6.9	 \\ 
		\hline
		Subgridding	&			-3.1\%			&	34.5	 \\ 
		\hline
	\end{tabular}
	\label{table:result_hm}
\end{table}

\begin{figure}[t]

	\centering
	\begin{tikzpicture}
	\pgfmathsetmacro{\fraccolwid}{0.27};
	\pgfmathsetmacro{\dist}{2.8};
	
	\node[inner sep=0pt] (whitehead) at (0,0) [below] {\includegraphics[width = \fraccolwid\columnwidth]{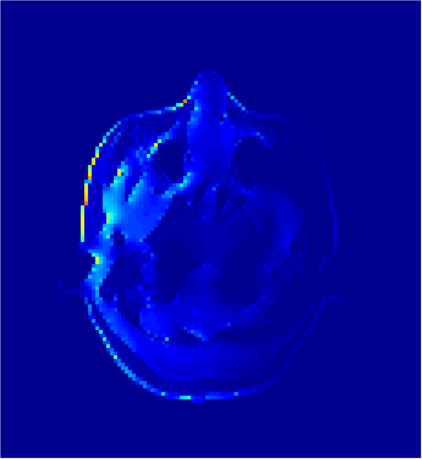}};
	\draw[label] (0,0) node [above] {All-fine};
	
	\node[inner sep=0pt] (whitehead) at (\dist,0)[below] {\includegraphics[width = \fraccolwid\columnwidth]{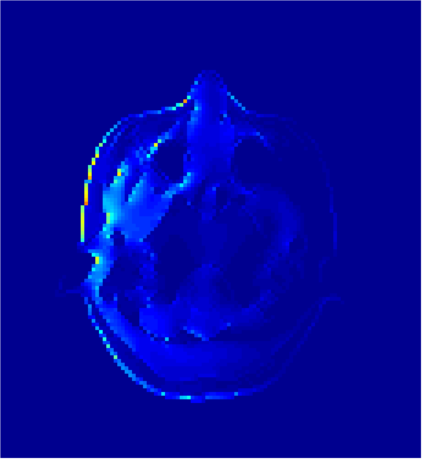}};
	\draw[label] (\dist,0) node [above] {Subgridding};
	
	\node[inner sep=0pt] (whitehead) at (2*\dist,0)[below] {\includegraphics[width = \fraccolwid\columnwidth]{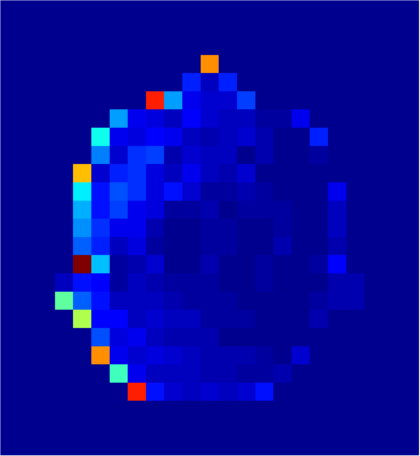}};
	\draw[label] (2*\dist,0) node [above] {All-coarse};
	\end{tikzpicture}
	
	\caption{SAR maps obtained in Sec.~\ref{sec:ne_hm} with all-fine (2~mm), all-coarse (1~cm) discretization and with subgridding (2~mm in the tissues and 1~cm in the air).}
	\label{fig:result_hm}
\end{figure}

The resulting SAR maps are shown in Fig.~\ref{fig:result_hm}. Table~\ref{table:result_hm} shows simulation times and percent error in the total SAR with respect to the all-fine simulation. The results show that with subgridding a speedup of 46X can be obtained with only 3.1\% loss in accuracy. No noticeable difference can be seen on the SAR maps in Fig.~\ref{fig:result_hm} between the subgridding and all-fine simulations. When, instead, the coarse resolution was chosen for the entire grid, the head tissues were not sufficiently resolved and the integral SAR differed from the reference by 59.5\%, showing the necessity of local grid refinement. 

\section{Conclusion}
This paper proposed a dissipative systems theory for FDTD, recognizing that FDTD equations can be seen as a dynamical system which is dissipative under a generalized Courant-Friedrichs-Lewy condition. The theory provides a new, powerful framework to make FDTD stability analysis simpler and modular. Stability conditions can indeed be given on the individual components (e.g. boundary conditions, meshes, thin-wire models) rather than on the whole coupled FDTD setup. The theory is intuitive since rooted on the familiar concept of energy dissipation, and sheds new light on the root mechanisms behind FDTD instability.  As an example of application, a simple yet effective subgridding algorithm is derived, with straightforward stability proof. The proposed algorithm allows material traverse, is simpler to implement than existing solutions, and supports an arbitrary grid refinement ratio. Numerical results confirm its stability and accuracy. Speedups of up to 46X were observed with only 3.1\% error with respect to standard FDTD.
\bibliographystyle{myIEEEtran}
\bibliography{IEEEabrv,bibliography}

\end{document}